\newcommand{\cmark}{\textcolor{black}{\ding{52}}}%
\newcommand{\xmark}{\textcolor{red}{\ding{56}}}%
\def\intensityunit{g$\cdot$CO$_2$eq/kWh\xspace}
\def\systemName{\textit{CarbonFlex}\xspace}
\def\systemNameSim{\textit{CarbonFlex-Simulator}\xspace}
\def \systemOracle{\textit{CarbonFlex(Oracle)}\xspace}
\def \agnostic{\textit{Carbon-Agnostic}\xspace}
\def \GAIA{\textit{GAIA}\xspace}
\def \WaitAWhile{\textit{Wait Awhile}\xspace}
\def \CarbonScaler{\textit{CarbonScaler}\xspace}
\author{Walid A. Hanafy}
\affiliation{
  \institution{University of Massachusetts Amherst}
  \country{USA}
}
\author{Li Wu}
\affiliation{
  \institution{University of Massachusetts Amherst}
  \country{USA}
}
\author{David Irwin}
\affiliation{
  \institution{University of Massachusetts Amherst}
  \country{USA}
}
\author{Prashant Shenoy}
\affiliation{
  \institution{University of Massachusetts Amherst}
  \country{USA}
}
\renewcommand\footnotetextcopyrightpermission[1]{}
\begin{document}

\title[\systemName]{
\systemName: Enabling Carbon-aware Provisioning and Scheduling for Cloud Clusters}

\begin{abstract}
    
Accelerating computing demand, largely from AI applications, has led to concerns about its carbon footprint. Fortunately, a significant fraction of computing demand comes from batch jobs that are often delay-tolerant and elastic, which enables schedulers to reduce carbon by suspending/resuming jobs and scaling their resources down/up when carbon is high/low. However, prior work on carbon-aware scheduling generally focuses on optimizing carbon for individual jobs in the cloud, and not provisioning and scheduling resources for many parallel jobs in cloud clusters.  

To address the problem, we present CarbonFlex, a carbon-aware resource provisioning and scheduling approach for cloud clusters. 
CarbonFlex leverages continuous learning over 
historical cluster-level data to drive near-optimal runtime resource provisioning and job scheduling. 
We implement CarbonFlex by extending AWS ParallelCluster to include our carbon-aware provisioning and scheduling algorithms.  Our evaluation on publicly available industry workloads shows that CarbonFlex decreases carbon emissions by $\sim$57\% compared to a carbon-agnostic baseline and performs within 2.1\% of an oracle scheduler with perfect knowledge of future carbon intensity and job length.
\end{abstract}

\maketitle
\section{Introduction} \label{sec:intro}

Data centers' energy demand is growing at unprecedented levels~\cite{Shehabi2024:USDCReport}, raising concerns about their carbon emissions and environmental impact.  For example, a recent report predicts that global data center energy consumption will reach 1000 terrawatt-hours (TWh) by 2026~\cite{iea2024electricity}, or the equivalent of the average annual energy consumption of $\sim$33 million U.S. homes.  This energy demand is expected to rise to 6-12\% of the total U.S. electricity demand in the next 3-5 years~\cite{Shehabi2024:USDCReport}. Beyond the technical challenges in satisfying surging data center demand~\cite{Lin2024:ExplodingAI}, their operations are also raising environmental and health concerns~\cite{han2024unpaidtollquantifyingpublic}. The Information and Communication Technologies (ICT) sector is now responsible for an estimated 1.5-4\% of global carbon emissions, with data centers contributing the largest share~\cite{world_bank_green_2023}. Indeed, Google recently reported a 48\% increase in its carbon footprint over the past five years~\cite{Google_SusReport2024}. To address the problem, data center operators, particularly large hyperscalers, have begun taking steps to reduce their carbon footprint by optimizing their energy sources and operations.

Data centers and cloud providers have long used \emph{supply-side} approaches to decrease their emissions by procuring low-carbon energy in the market~\cite{offset-guide}. For example, cloud providers often make power purchase agreements (PPAs) with low-carbon energy suppliers, such as wind farms, to procure sufficient low-carbon energy to match their annual energy consumption~\cite{google-ppa, clouds_largest_ppas}. However, PPAs do not eliminate carbon emissions, as data centers still rely on grid energy and may consume high-carbon energy whenever their demand exceeds the supply of low-carbon energy, which is often from intermittent renewables~\cite{Cole:2021}.

To complement supply-side strategies, researchers have proposed \emph{demand-side} optimizations that reduce carbon by leveraging computing's flexibility and adapting its demand to increase the use of low-carbon energy. For example, a significant fraction of computing demand comes from batch jobs that are often delay-tolerant and elastic, which enables schedulers to reduce carbon by suspending/resuming jobs and scaling their resources down/up when carbon is high/low~\cite{ Acun2023:CarbonExplorer, Sukprasert2024:Limitations, Radovanovic2023:VCCPaper,Hanafy2023:CarbonScaler, Dodge2022:AICloud, Gsteiger2024:Caribou}.

Given the potential above, there has been significant recent work on leveraging demand-side optimization to reduce carbon emissions of parallel batch jobs in the cloud~\cite{Hanafy2023:CarbonScaler, Wiesner2021:WaitAwhile, Souza2023:Ecovisor, Lechowicz2023:DTPR}.  For example, recent work leverages batch jobs' delay-tolerance to reduce carbon by simply suspending them when energy's carbon intensity is high, i.e., above some threshold, and resuming them otherwise~\cite{Wiesner2021:WaitAwhile}.  Other work leverages parallel jobs' elasticity to reduce carbon by scaling their resources down and up when energy's carbon intensity goes up and down, respectively. However, prior work has generally focused on optimizing carbon emissions for individual parallel jobs in the cloud, and not provisioning and scheduling resources for many parallel jobs in cloud clusters~\cite{Hanafy2023:CarbonScaler}.

Designing a carbon-aware scheduler for multiple parallel jobs in cloud clusters poses new challenges not addressed by previous scheduling approaches.  First, clusters have a capacity limit that prior work on optimizing individual jobs in the cloud does not consider. Considering a capacity limit is important to avoid a ``thundering herd'' problem~\cite{ruane1990:THPrblem} where all jobs defer their execution to the same low-carbon time and potentially exceed the cluster's capacity. Second, prior per-job approaches often assume that important job characteristics, such as job length, are known \emph{a priori}.  However, batch schedulers in practice generally do not know such detailed job-level information. For example, prior work has shown that accurately estimating per-job resource usage and duration is challenging~\cite{Kuchnik2019ThisIW}. Third, per-job approaches generally focus on minimizing carbon emissions while meeting the job deadline, while cluster schedulers often optimize other metrics, such as mean waiting time, makespan, and throughput. 

Beyond per-job scheduling, prior work at the cluster level has explored carbon-aware cluster capacity provisioning~\cite{Radovanovic2023:VCCPaper, Lin2023:Adapting, Hanafy2024:GoingGreen, Zhang2021:VariableCapacityChallanges, Zheng2020:Curtailment}. Given their low average utilization~\cite{Tirmazi2020:Borg, Shehabi2016:USDCReport, Shehabi2024:USDCReport}, prior work adapts the cluster's resource capacity based on energy's carbon intensity---by opportunistically scaling cluster capacity up when carbon is low. For example, Google defines a Variable Capacity Curve (VCC)~\cite{Radovanovic2023:VCCPaper} that determines a cluster's time-varying capacity limit. % and then uses existing policies to schedule jobs within the capacity.  
This approach implicitly shifts jobs to run when the carbon intensity is low. However, prior cluster-level techniques for reducing carbon emissions focus on resource provisioning and overlook more efficient scheduling decisions, which can lead to higher carbon emissions and job completion times~\cite{Souza2023:Ecovisor, Hanafy2023:CarbonScaler}.

To address these limitations, we present \systemName, a carbon-aware resource manager.
\systemName views cluster resource management as two distinct tasks: capacity provisioning and job scheduling, and applies the principle of \emph{elastic scaling} to both its provisioning and scheduling decisions.
In particular, \systemName leverages the elastic scaling capabilities available in many parallel batch jobs (e.g., scientific simulations~\cite{Fox2017:E-HPC, Martin2024:Proteo, Tarraf2024:Malleability} and machine learning training), where scaling their allocated resources up or down according to the carbon intensity is beneficial in carbon optimization.
Moreover, when such elastic scheduling is done in conjunction with cluster-level capacity provisioning, where the entire cluster capacity is also scaled in a similar fashion, \systemName can further decrease carbon emissions.

\systemName's elastic scaling and scheduling generalizes the notion of resource scaling introduced in CarbonScaler~\cite{Hanafy2023:CarbonScaler}, by combining elastic scaling of parallel batch jobs with time-varying cluster capacity provisioning. %\systemName's scaling algorithms also achieve significant improvements over  CarbonScaler's algorithm when adapted to cluster settings.  
Unlike CarbonScaler, which requires \emph{a priori} knowledge of job length, \systemName's algorithms operate without this information and achieve greater savings.  Furthermore, the separation of resource
provisioning from job scheduling in \systemName allows integration with alternative provisioning strategies--such as the VCC approach~\cite{Radovanovic2023:VCCPaper}--or the use of \systemName's capacity provisioning approach with other cluster schedulers.

A key insight in \systemName is the use of continuous historical learning---learning over historical data---to drive its provisioning 
and scheduling decisions. Specifically, we utilize theoretical results that provide the basis for optimal carbon-aware scheduling of batch jobs in an offline setting where full future knowledge of job arrival, job lengths, and carbon intensity variations is known. 
In practice, while the future is unknown, the history of job arrivals, job characteristics, and carbon intensity is known.
\systemName uses this information to ``simulate'' the offline optimal algorithm over past time windows to learn the scheduling and provisioning decisions and then uses parameters from this simulated execution for its runtime scheduling and provisioning. 
When the distributions of job characteristics and carbon intensity variations are stable, \systemName's decisions achieve carbon savings that are close to the optimal algorithm and are significantly better than other baseline methods. Moreover, by continuously learning from historical data in this manner, \systemName can adapt to changes in both job characteristics and carbon intensity variation patterns over time.
Our hypothesis is that \emph{continuous learning of optimal provisioning and scheduling decisions over historical job traces is an effective approach for carbon-aware provisioning and scheduling of elastic parallel batch jobs in cloud clusters.}

In designing, implementing, and evaluating \systemName, our paper makes the following contributions. 

\begin{enumerate}[leftmargin=*]
    \item We present the design of \systemName, a resource manager for cloud clusters that optimizes operational carbon emissions by continuously learning provisioning and scheduling decisions from historical traces.     
    \item We implement a prototype of \systemName on AWS ParallelCluster~\cite{aws-pc}, a cloud HPC environment, using CPU and GPU clusters, and demonstrate its efficacy for a wide range of elastic MPI-based scientific and ML training jobs. 
    \item We evaluate  \systemName using publicly available cluster traces, job profiles, and carbon intensity traces from different geographical regions. 
    Our evaluation results show that \systemName decreases carbon emissions by more than 57.5\%, compared to a carbon-agnostic baseline and performs within 2.1\% of a carbon-aware scheduling oracle.
\end{enumerate}

\section{Background} \label{sec:background}
This section presents background on data center carbon emissions, carbon-aware scheduling, and elastic batch jobs. 

\subsection{Data Centers and the Electricity Grid}\label{sec:background_CI}
Data centers have traditionally focused on optimizing their energy efficiency through a variety of infrastructure-level and operational-level optimizations~\cite{energy-efficiency-survey, Barroso:2007:energy-proportional}. For example, innovations in cooling (e.g., open-air cooling) have yielded significant reductions in their Power Usage Effectiveness (PUE), a metric that captures data centers' energy efficiency. However, since data center energy efficiency has become highly optimized, further optimizations are expected to yield diminishing marginal improvements.  Thus, cloud operators have begun to focus directly on the environmental and carbon impact of data center infrastructure~\cite{Acun2023:CarbonExplorer, Radovanovic2023:VCCPaper}. The carbon impact of data centers consists of two main components: (i) \emph{operational emissions}, which comprise the emissions generated from the energy consumed by the hardware and infrastructure during its operations, and (ii) \emph{embodied emissions}, which consist of the emissions generated during the manufacturing and transporting of the computing hardware and other infrastructure components~\cite{Gupta2022:ACT, Switzer2023:Junkyard}.  Our work focuses on optimizing operational emissions through demand-side workload shifting methods, as it constitutes the majority of data center carbon emissions~\cite{Malmodin2024:ICTElectricity, Schneider2025:Emissions}.

Demand-side shifting methods, such as temporal and spatial shifting, are feasible in data centers because the carbon emissions from the consumption of a unit of electricity are not constant and vary continuously over time and across geographic regions. These variations are captured by energy's carbon intensity (CI), measured in grams of CO$_2$ per kWh of electricity or \intensityunit, which captures the greenhouse gas (GHG) emissions per unit of electricity generated. The emissions from a unit of electricity generated depend on the source of generation, with fossil-based sources (e.g., natural gas and coal) having a high carbon intensity, while renewable sources (e.g., wind, solar, and hydro) have a low or zero carbon intensity. 

\begin{figure}
    \centering
    \includegraphics[width=0.95\linewidth]{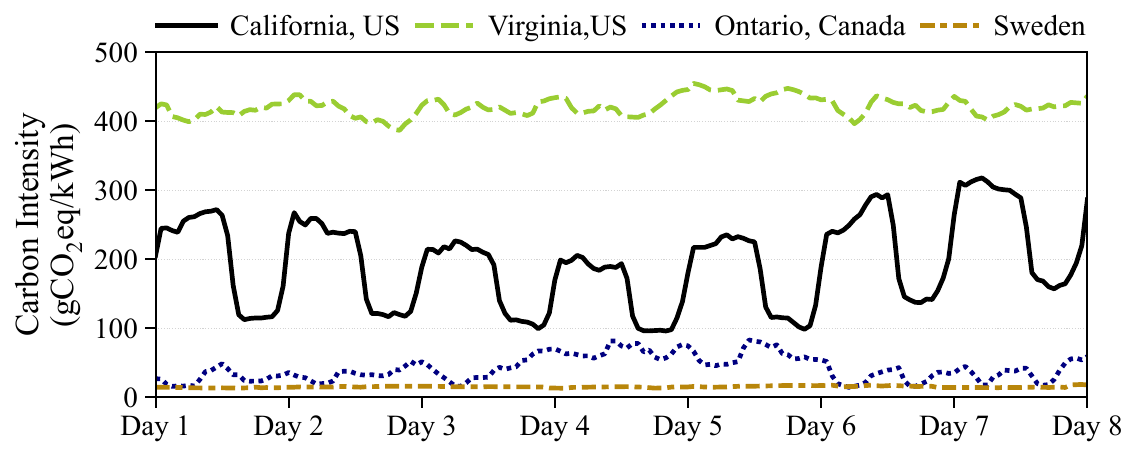}
    \caption{Carbon Intensity Variations in four locations in the first week of April 2022.}
    \Description{Carbon Intensity Variations in four locations in the first week of April 2022.}
    \label{fig:trace_week}
\end{figure}

~\autoref{fig:trace_week} shows an example of four regions ---  with different energy sources --- representing carbon intensity exhibited by cloud data centers. As shown, the figure highlights that the carbon intensity varies widely among locations with up to a $\sim$400\intensityunit difference between Virginia and Canada, Ontario, two regions equally distant from customers in the northeast of the US.  Moreover, the figure shows that even at a given location, carbon intensity fluctuates over time. For instance, the figure shows that the carbon intensity in California varies daily by $\sim$100\intensityunit, resulting in different carbon footprints as per the time of execution. Workload-shifting techniques exploit these spatial and temporal variations by opportunistically performing more work at low-carbon periods or regions to reduce their operational emissions\cite{Acun2023:CarbonExplorer, Radovanovic2023:VCCPaper, Hanafy2023:CarbonScaler, Dodge2022:AICloud, Sukprasert2024:Limitations, Gsteiger2024:Caribou}.

\begin{table}[t]
\caption{Summary of prior work}
\label{tab:bg_summary}
\resizebox{\linewidth}{!}{%
\begin{tabular}{l|c|c|c|c|c}
\toprule
\textbf{Approach} & \makecell{\textbf{Multiple} \\ \textbf{Jobs}} & \makecell{\textbf{Unknown} \\ \textbf{Job Length}}  &\makecell{\textbf{Capacity} \\ \textbf{Scaling}} & \makecell{\textbf{Carbon-aware} \\ \textbf{Scheduling}} & \makecell{\textbf{Resource} \\ \textbf{Scaling}} \\ \midrule
Wait Awhile~\cite{Wiesner2021:WaitAwhile} & \xmark & \cmark  &\xmark & \cmark & \xmark \\
DTPR~\cite{Lechowicz2023:DTPR} & \xmark & \xmark  &\xmark & \cmark & \xmark \\
Wait and Scale~\cite{Souza2023:Ecovisor} & \xmark & \cmark  &\xmark & \cmark & \cmark \\
Carbon Scaler~\cite{Hanafy2023:CarbonScaler} & \xmark & \xmark  &\xmark & \cmark & \cmark \\ \bottomrule \toprule
GAIA~\cite{Hanafy2024:GoingGreen} & \cmark & \cmark  &\cmark & \cmark & \xmark \\
Green~\cite{Xu2025:Green} & \cmark & \cmark  &\xmark & \cmark & \cmark \\
Risk-Aware ~\cite{Perotin2023:Risk} & \cmark & \xmark  &\cmark & \cmark & \xmark \\
Google VCC~\cite{Radovanovic2023:VCCPaper} & \cmark & \cmark  &\cmark & \xmark & \xmark \\
Adaptive Capacity~\cite{Lin2023:Adapting} & \cmark & \cmark  &\cmark & \xmark & \xmark \\ \bottomrule \toprule
\systemName & \cmark & \cmark  &\cmark & \cmark & \cmark \\ \bottomrule
\end{tabular}%
}
\end{table}

\subsection{Carbon-Aware Scheduling}
Carbon-aware scheduling has focused on temporal shifting approaches that schedule jobs according to the carbon intensity and provisioning approaches that change the cluster size as per the carbon intensity. 

\noindent\emph{\textbf{Temporal Shifting.}}
The temporal variations in carbon intensity have motivated researchers to utilize the inherent temporal flexibility of batch jobs by running them during low-carbon periods and suspending them during high-carbon periods~\cite{Wiesner2021:WaitAwhile, Souza2023:Ecovisor, Hanafy2024:GoingGreen, Acun2023:CarbonExplorer, Lechowicz2023:DTPR, Sukprasert2024:Limitations}.
In addition, researchers have proposed elastic scheduling methods, where jobs are typically scaled at low carbon periods and suspended at high carbon periods, eliminating the need to extend the deadline, or increasing the savings compared to typical suspend-resume approaches~\cite{Hanafy2023:CarbonScaler, Souza2023:Ecovisor}. 
The key issue for these approaches is that they typically focus on the scheduling of individual jobs. In doing so, these approaches either utilize a threshold-based approach ~\cite{Wiesner2021:WaitAwhile, Souza2023:Ecovisor}, which requires significant manual tuning to select a proper threshold and scale that balances the carbon savings and performance or assume full knowledge of the job length~\cite{Hanafy2023:CarbonScaler, Lechowicz2023:DTPR}, which is typically known to be error prone in practice ~\cite{Kuchnik2019ThisIW, Ambati2021:GoodThings}. 
Moreover, these individual job approaches do not consider the data center-wide capacity constraints, resulting in demand bursts at low carbon periods ~\cite{Hanafy2024:GoingGreen}, also known as the stampede or the thundering herd problems. ~\autoref{tab:bg_summary} depicts a summary of these approaches' key assumptions and mechanisms.

\subsubsection*{\textbf{Cluster Schedulers.}}~At a cluster-level,  carbon optimizations utilize the ability to vary the cluster capacity as well as the low average utilization of data centers~\cite{Tirmazi2020:Borg, Shehabi2016:USDCReport, Shehabi2024:USDCReport}, to change the cluster capacity based on temporal variations in carbon intensity~\cite{Radovanovic2023:VCCPaper, Lin2023:Adapting, Hanafy2024:GoingGreen, Perotin2023:Risk, Zheng2020:Curtailment, Zhang2021:VariableCapacityChallanges}.
The key idea behind these approaches is varying the cluster size based on the carbon intensity of electricity---by opportunistically using larger cluster capacities in low carbon periods.
For example, Google's variable capacity curve (VCC)~\cite{Radovanovic2023:VCCPaper} computes a time-varying capacity limit for a data center cluster and then uses batch scheduling to schedule jobs in this variable capacity cluster, which forces batch jobs to move to lower carbon periods while ensuring the daily demand is met. 
However, despite the benefits of these approaches, they tend to focus on varying the cluster capacity rather than job scheduling. For instance, these approaches do not utilize application elasticity or explicitly address the demand bursts in low-carbon periods. ~\autoref{tab:bg_summary} summarizes the state-of-the-art approaches carbon-aware provisioning and scheduling. 
In this paper, we explore the benefits of utilizing carbon-aware provisioning and scheduling, where we vary the cluster size while scheduling jobs to reduce carbon emissions further.

\begin{figure}[t]
    \centering
    \includegraphics[width=.95\linewidth]{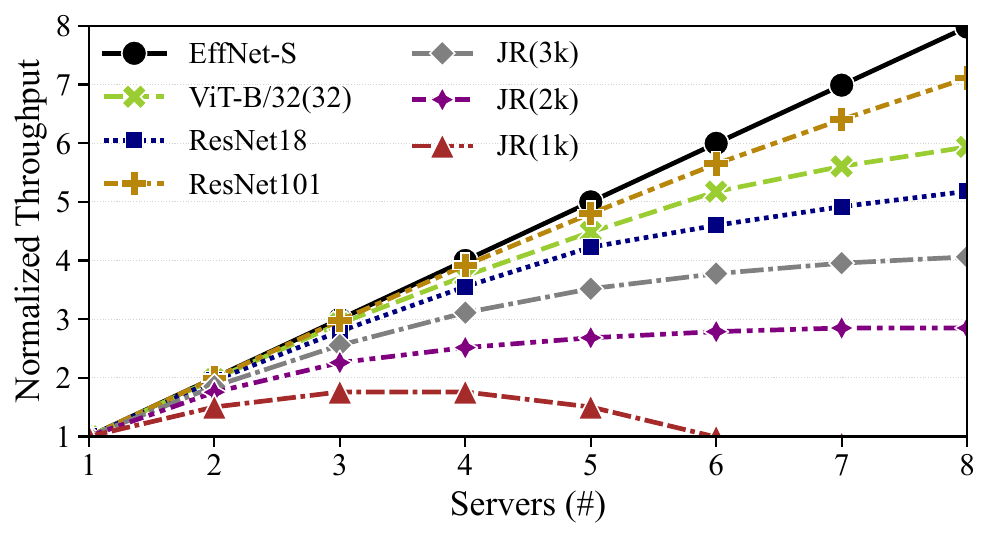}
   \caption{Elastic scaling profiles of different MPI and machine learning jobs that depict the marginal increase in throughput for each additional server. }
   \Description{Elastic scaling profiles of different MPI and machine learning jobs that depict the marginal increase in throughput for each additional server. }
    \label{fig:elasticity}
\end{figure}

\subsection{Elastic Batch Jobs and Scaling Profiles}
Elastic scaling, also referred to as malleability~\cite{Tarraf2024:Malleability, Martin2024:Proteo}, is the ability to change the allocated resources seamlessly and has been shown to be applicable to a broad class of distributed batch jobs. 
For instance, machine learning frameworks (e.g., Pytorch~\cite{pytorch}), Data Processing frameworks (e.g., Spark~\cite{spark}), Parallel Programming Frameworks (e.g., MPI~\cite{Isa2016:ElasticMPI} and Charm++\cite{kale1993charmpp}) allow applications to adapt resources dynamically.      
Elastic scaling capabilities have enabled cluster operators to increase the utilization of cluster resources and avoid head-of-line blocking, fault-tolerance, and decrease energy consumption~\cite{Tarraf2024:Malleability, Gupta2014:RealizingMalleable, Prabhakaran2015:Malleable, Xiao2020:AntMan, Amico2019:Slowdown, Peng2018:Optimus, Subramanya2023:Sia, Qiao2021:Pollux}. In contrast, we focus on elastic scaling to optimize clusters' operational emissions. 

Elastic scaling of a batch job must consider its scaling characteristics ---  since batch jobs rarely scale linearly with the number of allocation servers. Typically, the scaling behavior of a distributed job depends on its compute and communication characteristics~\cite{Subramanya2023:Sia, Hanafy2023:CarbonScaler, Li2024:CommCCS}. The greater the communication per unit compute, the less likely the job's throughput will scale with increasing resources. This is because communication bottlenecks increase when the computational resources are scaled up, resulting in diminishing increases in performance. 
~\autoref{fig:elasticity} shows the elastic scaling profiles of different batch applications.  The setup for these profiles is detailed in ~\autoref{sec:eval_setup}.
As shown, applications exhibit varying elastic scaling behaviors as per their compute-to-communication ratios. For example, EffNet-S has 8.37 GFLOPs and is 82.7 MB, while ResNet18 has 1.81 GFLOPs and is 44.7MB\footnote{Data acquired from \url{https://pytorch.org/vision/stable/models.html}}, making the communication (model memory footprint) per unit compute 9.8MB/GFLOPs, 24.6MB/GFLOPs for EffNet-S and ResNet18, respectively, yielding higher scalability of EffNet-S as depicted in the figures.

\section{Carbon-aware Cluster Resource Management} \label{sec:prob}

This section describes the carbon-aware cluster provisioning and scheduling problem addressed in this paper.

Our work assumes a homogeneous cloud clusters consisting of either CPU or GPU servers and aims to optimize the operational carbon footprint of parallel batch jobs. We assume that the cluster capacity can be dynamically varied over time  --- using cloud interfaces to acquire or release server instances --- and that the maximum allowed cluster capacity is capped by a configurable parameter $M$. Similar to batch cluster schedulers, the cluster is assumed to support multiple submission queues (e.g., by job priority), where we assume that each queue has a pre-configured maximum delay $d_i$ associated with it. This delay parameter $d_i$ indicates the maximum duration (``slack'') the job can wait or be paused during its execution. Users submit their batch jobs to a specific queue according to their willingness to delay their jobs in exchange for potentially higher carbon savings. 

Our work targets elastic distributed (or parallel) batch jobs that run concurrently on multiple servers and potentially communicate across components during execution. Each job $j$ has an arrival time $a_j$ and is submitted to job queue $i$ that is configured with maximum delay $d_i$. 
The number of servers $k$ allocated to the job can be varied at run-time between an upper and lower bound:  
 $k\in [k_j^{min}, k_j^{max}]$ where $k_j^{min}$ and  $k_j^{max}$ denote the minimum and maximum numbers of servers that can be allocated to that job. 
 Our model also supports non-elastic workloads (i.e., $k_j^{min}= k_j^{max}$), where only the cluster capacity is scaled (see ~\autoref{sec:eval_ccs}).
Further, we assume that a job's elastic scaling profile is known, which can be learned from profiling or performance models that consider the communication and computation patterns of jobs~\cite {paleo, Oyama2016, Pei2019, justus2018, cai2017neuralpower, Peng2018:Optimus, performance_modeling}. 
Our work considers a normalized elastic scaling profile, where the profile of a job $j$, denoted as $p_j$, captures the normalized  throughput increase (i.e., marginal throughput) for each additional server $k$ where $k \in [k_j^{min}, k_j^{max}]$, 
and $p_j(k_j^{min})=1$.

Under this scenario, at each time slot $t$, given the set of queued up and currently executing jobs $N_t$, maximum cluster capacity $M$, delay configurations $D$, and carbon intensity ${CI}_t$, our goal is to make two decisions: (i) provisioning decision: what should be the cloud cluster capacity for the next time step and (ii) scheduling decision: how many servers should be allocated to queued and running jobs, subject to the maximum allowed cluster capacity. The objective is to minimize the operational carbon emissions of the entire cluster while completing jobs within their queue-specific slacks.

\section{CarbonFlex Design} \label{sec:des}

\begin{figure*}[t]
    \centering
    \includegraphics[width=.8\linewidth, valign=t]{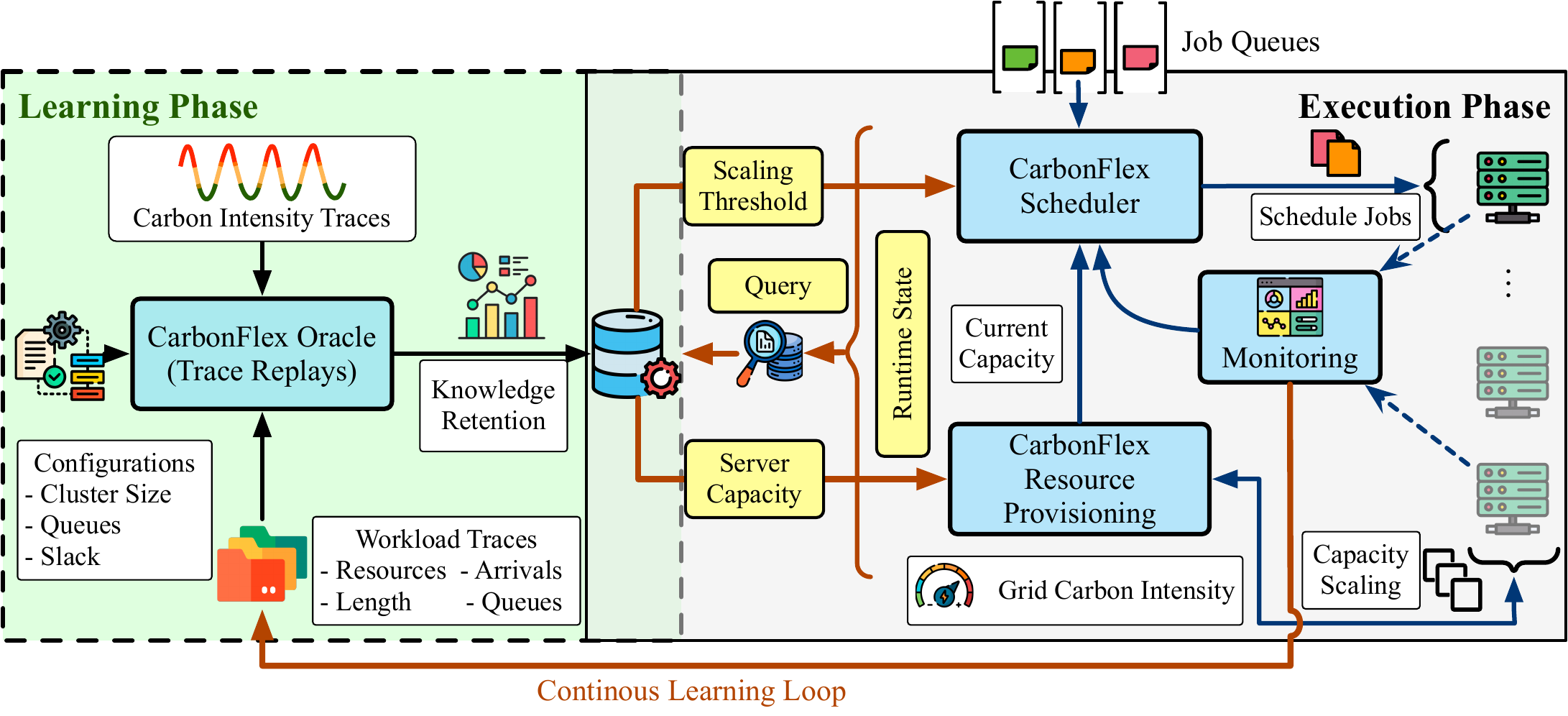}
    \caption{Overview of the learning and execution phases of \systemName. }
    \Description{Overview of \systemName.}
    \label{fig:system_overview}
\end{figure*}
\begin{figure}[t]
    \centering
    \includegraphics[width=0.95\linewidth]{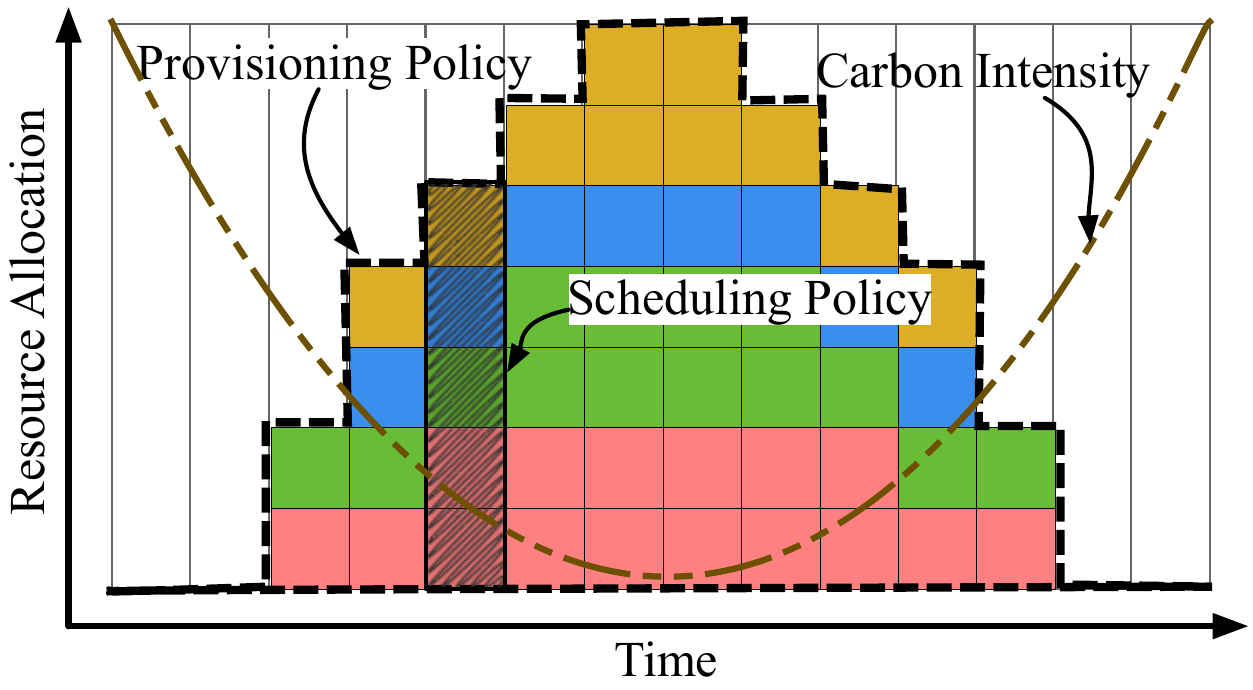}    
    \caption{Representing the decisions made by \systemOracle as a provisioning and scheduling policy.}
    \Description{Representing the decisions made by \systemName Oracle.}
    \label{fig:idea}
\end{figure}

This section presents \systemName's design and its elastic scaling-driven provisioning and scheduling algorithms.

\subsection{\systemName Overview}
\systemName is a carbon-aware resource manager for batch-oriented cloud clusters. The design of \systemName is based on three key principles:
\begin{enumerate}[leftmargin=*]
    \item \textbf{Separate Provisioning from Scheduling:} \systemName views cluster resource management as two distinct tasks: provisioning and scheduling. The provisioning policy $\phi(\cdot)$ determines how many servers to acquire from the cloud for the entire clusters, while the scheduling policy $\psi(\cdot)$ determines which batch jobs to run on the available servers and how many servers to allocate to each. This separation of provisioning and scheduling is similar to  other frameworks such as Mesos~\cite{mesos}.
    \item \textbf{Elastic Scaling:} \systemName applies the principle of elastic scaling to both its provisioning and scheduling policies, but with a carbon-aware focus. \systemName dynamically adjusts the cluster capacity and scale of batch jobs in response to carbon intensity variations, workload demand, and workloads' scalability.
    \item \textbf{Historical Learning:} Lastly, \systemName uses a historical learning approach to derive its provisioning and scheduling decisions. To do so, \systemName simulates an offline oracle algorithm over past job arrivals to determine how such an optimal algorithm (which has full knowledge of job characteristics, carbon intensity, and future arrivals) would schedule jobs in a carbon-efficient manner. \systemName continuously learns key parameters for provisioning and scheduling from this historical analysis and uses them to derive its provisioning and scheduling decisions. We argue  that \emph{under the presence of a stable workload distribution, mimicking the decisions of an oracle provides similar carbon savings at runtime without any knowledge of job characteristics or future arrivals. } We handle workload changes through continuous learning, which enables adaption to distribution shifting by ``relearning'' the parameters needed for scheduling and provisioning.
\end{enumerate}

\autoref{fig:system_overview} depicts \systemName's architecture and shows how the design principles above are instantiated through the learning and execution phases. In doing so, \systemName employs a two-step approach: 1) a Learning Phase, where \systemName employs continuous historical learning phase over the most recent cluster execution traces and captures the key decisions at different runtime states, and 2) an Execution phase, where \systemName utilizes such knowledge to enhance its provisioning and scheduling, which we detail below.

\subsection{Learning Phase}\label{sec:des_learn}

The \systemName's learning phase (see ~\autoref{fig:system_overview}) employs continuous historical learning on recent cluster execution logs by replaying them to an offline oracle algorithm and learning from its provisioning and scheduling decisions. The process involves periodically (e.g., daily) tracking the job arrival logs and carbon intensity traces over a window of length $T$ and replaying those traces to a simulated oracle algorithm. Note that the oracle algorithm can not be employed in practice since it requires full knowledge of the job arrival sequence, job characteristics, and carbon intensity variations. However, since the learning phase operates over historical traces, the entire arrival trace, characteristics of the jobs, and carbon intensity traces are known over the window $T$, making it possible to simulate an oracle over this past window. %As discussed below, the oracle's decisions are optimal under certain conditions, such as a homogeneous cluster.  
The oracle's decisions can be viewed as mappings from the overall system state at each time $t$ to the cluster capacity used for that state and the scheduling behavior in that case. 

As an example, consider a simplified system state described using two parameters: a carbon intensity value at time $t$, denoted by ${CI}_t$, and a job vector $N_t$ that captures the number of jobs (running and queued) in each job queue. In this case, the tuple mapping $({CI}_t, N_t) \mapsto (m_t,\rho)$ denotes the cluster capacity $m_t$ that was provided by the oracle for that system state, while $\rho$ denotes the lowest marginal throughput across all scheduled jobs, indicating that no jobs with elastic scaling curves below this threshold were chosen for execution. The tuple mappings from the oracle's simulated decisions at each time step $t$ are then stored in a knowledge base that is later consumed during the execution phase.

\RestyleAlgo{ruled}
\begin{algorithm}[t]
    \footnotesize
    \caption{\texttt{CarbonFlex\_Oracle\_Algorithm()}}
    \label{alg:offline}
    \KwIn{Jobs $N$, Max Resources $M$, Carbon Intensities $\mathcal{CI}$.}
    \KwOut{Schedule $S$}
    \textbf{Initialization:} $S \gets \{s_1, ...,s_N\}$ and $L \gets [ \ ]$\;%
    \For {$j\in N$}{%
        \For{$t \in [a_j, a_j + l_j + d_j]$}{%
            \For{$k \in [k_j^{min}, k_j^{max}]$}{%
                $L$.append($j, t,k, p_j(k)/{CI}_t, a_j + l_j + d_j$)\;
            }
        }
    }
    $L \gets$ Sort($L$) ; \tcp{w.r.t. $p_j(k)/{CI}_t$ then $a_j + l_j + d_j$}\label{algline:sorting_offline}
    \While{$|L|>0$}{%
        $j,t,k,*,* \gets L.pop()$; \tcp{next highest $p_j(k)/{CI}_t$}
        \If{$\sum_{j'\in N \setminus {j}}s_{j'}[t] + k >=M$}{%
            continue; \tcp{I cannot scale the current job.}
        }
        \If(\tcp*[f]{Job not done.}){$progress(s_j)< 100\%$}{%
       
            $s_j[t] = k$; \tcp{allocation of $j$ in slot $t$ as $k$.}
        }
    }
    \For{$s_j \in S$}{
        \If{$progress(s_j)< 100\%$}{%
            \Return None\ \tcp{Non Feasible}
        }
    }
    \Return $S$
\end{algorithm}

\subsubsection*{\textbf{\systemName  Oracle.}}~The oracle, depicted in
\autoref{alg:offline}, is a greedy algorithm that generates an execution schedule and a time-varying cluster capacity required to execute that schedule over a past window $T$. The algorithm takes historical job traces of $T$-length (e.g., a week) containing $N$ jobs. Each job is characterized by an arrival time $a_j$, job length $l_j$, allowed delay $d_j$ (based on the selected job queue), and a scaling profile $p_j$, as explained in ~\autoref{sec:prob}. Using a resource at time slot $t: t \in [0, T]$ incurs a constant cost ${CI}_t$ (e.g., carbon intensity), where $\mathcal{CI} = \{{CI}_1, {CI}_2, ... {CI}_T\}$ is the set of carbon intensities for the $T$-length window. The algorithm then creates the execution schedule that optimizes the cluster's total carbon emissions by creating a schedule $s$ for each job while respecting its arrival and delay constraints, as well as the maximum cluster capacity $M$.

\autoref{alg:offline} uses a greedy approach for elastically scaling and scheduling jobs.
The key insight is that for any given carbon intensity value, doing more work per unit of energy (i.e., greedily choosing jobs with higher marginal throughput) yields better energy and carbon efficiency~\cite{Subramanya2023:Sia, Hanafy2023:CarbonScaler}. The algorithm starts by computing the marginal throughput per unit of carbon by considering all jobs, the time allowed for each job (from $a_j$ to $a_j+l_j+d_j$), and allowed scales $[k_j^{min}, k_j^{max}]$. Then, it sorts the list (\autoref{algline:sorting_offline}) in descending order of marginal throughput per carbon unit, using deadlines as a tie-breaking rule. For example, when two jobs have the same scalability (e.g., both at scale one), the job with the earliest deadline is prioritized. Since $p_j(k_j^{min})=1 \ \forall j$, all jobs are assigned $k_j^{min}$ before scaling, which also ensures that no jobs are starved. The algorithm then iterates over the list, greedily assigning resources to jobs while respecting the maximum capacity until the job is completed, using the $progress(\cdot)$ function. Finally, it verifies that all jobs have completed; otherwise, it marks the schedule as infeasible for the specified cluster capacity and job delays. In this case, we repeat the algorithm while extending the deadline for jobs that were not finished.

\subsubsection*{\textbf{Runtime Complexity and Optimality.}}
~\autoref{alg:offline} runs in polynomial time. To schedule a trace with $N$ jobs and $K$ scaling states on a cluster of $M$ servers, the complexity of computing the marginal throughput per unit of carbon (Lines 2-5) is $\mathcal{O}(N \cdot K \cdot T)$, marginal throughput list sorting (Line 6) is $\mathcal{O}(N \cdot K \cdot T \cdot \log(N \cdot K \cdot T))$, and iterating over possible resource allocations (Lines 7-13) is $\mathcal{O}(N \cdot K \cdot T)$, and finally, the time complexity of job completion validation (Lines 14-16) is $\mathcal{O}(N)$. The total time complexity is $\mathcal{O}(N \cdot K \cdot T + N \cdot K \cdot T \cdot \log(N \cdot K \cdot T) + N)$ $\simeq \mathcal{O}(N \cdot K \cdot T \cdot \log(N \cdot K \cdot T))$.

\begin{theorem}
\autoref{alg:offline} yields optimal carbon savings for homogeneous clusters and monotonically decreasing marginal throughput profiles.   
\end{theorem}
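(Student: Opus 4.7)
The plan is to view any schedule as a set $X$ of \emph{marginal allocations} $(j,t,k)$ with $k\in[k_j^{min},k_j^{max}]$, where each element contributes $p_j(k)$ to $j$'s progress and charges $CI_t$ carbon for one server-slot at time $t$. A schedule is feasible when (i) $\sum_{(j,t,k)\in X}p_j(k)\geq l_j$ for every job, (ii) at most $M$ tuples share any fixed $t$, and (iii) $t\in[a_j,a_j+l_j+d_j]$. The objective is $\sum_{(j,t,k)\in X} CI_t$, and \autoref{alg:offline} builds $X$ greedily in decreasing order of the efficiency ratio $p_j(k)/CI_t$.

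First, I would prove a contiguity lemma: there is an optimal schedule in which, for each fixed $(j,t)$, the indices $k$ chosen form a prefix $[k_j^{min},k^*]$. This uses the monotone-decreasing marginal profile: whenever an optimum contains $(j,t,k+1)$ but not $(j,t,k)$, swap them; the carbon cost is unchanged (both charge $CI_t$) and $j$'s progress cannot decrease since $p_j(k)\geq p_j(k+1)$. Iterating normalizes any optimum into prefix form, which is exactly the form the greedy produces, so greedy's output $G$ and some optimum $O$ in prefix form can be compared item by item along the sorted list $L$ from \autoref{algline:sorting_offline}.

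The core step is an exchange along $L$. Assume for contradiction $\mathrm{cost}(O)<\mathrm{cost}(G)$ and let $x=(j^*,t^*,k^*)$ be the first tuple of $L$ on which $G$ and $O$ disagree. Because they agree on every earlier tuple, the capacity consumed at $t^*$ by other jobs and the progress already accrued by $j^*$ are identical in both at this step. If $x\in O\setminus G$, greedy's skip was triggered either because $j^*$ was already complete in $G$ (hence in $O$, so $x$ is wasteful in $O$ and can be removed to lower cost) or because the capacity predicate $\sum_{j'\neq j^*}s_{j'}[t^*]+k^*>M$ held in $G$ (hence in $O$, contradicting $x\in O$ together with the prefix form forced at $(j^*,t^*)$). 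If $x\in G\setminus O$, then every later tuple that $O$ uses to finish $j^*$ has strictly worse $p/CI$; swapping an appropriate subset of them out for $x$ produces a feasible schedule whose cost does not exceed $O$'s and which agrees with $G$ on a strictly longer prefix of $L$. Iterating eliminates every disagreement and yields $\mathrm{cost}(G)\leq\mathrm{cost}(O)$, a contradiction.

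The step I expect to be the main obstacle is the integral exchange in the case $x\in G\setminus O$: the later $j^*$-items used in $O$ need not have total throughput equal to $p_{j^*}(k^*)$, so care is required to choose a removable subset that keeps $j^*$'s progress at least $l_{j^*}$, maintains the capacity cap at every time (including freeing room for $x$ at $t^*$), and strictly bounds the cost change via the efficiency ordering. Cluster homogeneity, which makes every server-slot at time $t$ interchangeable at cost $CI_t$, together with the contiguity lemma, which forces removals to slide back the tail of a prefix rather than create gaps, are precisely the hypotheses that let this rounding go through cleanly.
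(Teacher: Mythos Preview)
Your proof plan is quite different from what the paper actually does. The paper's ``proof'' is a two-sentence reduction: it asserts that \autoref{alg:offline} instantiates a known marginal-throughput scheduling problem for which greedy-by-ratio is already known to be optimal (citing an external reference), and then lists the standing assumptions that reference requires. No exchange argument, no lemmas---just an appeal to prior work. Your direct exchange argument is therefore a genuinely different, more self-contained route, and would add real content if it can be completed.

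That said, the $x\in G\setminus O$ case has a more serious gap than the integrality issue you flag. Adding $x$ to $O$ requires a free server-slot at~$t^*$. You propose to free it by removing \emph{later $j^*$-tuples}, but those may all live at times other than $t^*$; the slot at $t^*$ could instead be occupied by a later tuple $z=(j',t^*,k')$ belonging to a \emph{different} job~$j'$. Removing $z$ makes room for $x$ but may leave $j'$ under-completed, forcing you to reinsert work for $j'$ elsewhere, which in turn may displace yet another job, and so on. This is an augmenting-path argument, not a single swap, and ``homogeneity plus contiguity'' do not by themselves bound or terminate the cascade.

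Moreover, in the strictly integral whole-slot model you adopt, greedy is not optimal in general. Take $M=2$, two slots with $CI_1=1$, $CI_2=10$, and a single job with $l=1.6$, $p(1)=1$, $p(2)=0.5$. Greedy selects $(1,1,1),(1,1,2),(1,2,1)$ for cost $12$, while $(1,1,1),(1,2,1)$ achieves progress $2\ge 1.6$ at cost $11$. The obstacle you anticipated is therefore not merely technical---it is fatal in the integral setting. The paper sidesteps this because its model implicitly allows a job to stop mid-slot once complete (cf.\ the implementation footnote about partial-slot usage), which makes the problem effectively fractional; under fractional allocation, ratio-greedy is optimal by the standard continuous exchange, and your integrality concern evaporates. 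Your plan should make this fractionality explicit and then carry out the (now straightforward) exchange, or else replace the single-swap step with a full augmenting-path argument across jobs.
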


\begin{proof}
\autoref{alg:offline} maps the carbon-aware scheduling problem to  marginal throughput scheduling for which a greedy algorithm yields an optimal solution~\cite{ greedy_optimal}. This optimality requires the following assumption: 1) scalability profiles featuring a monotonically decreasing marginal throughput curve (i.e., $p_j(k)>p_j(k+1) \ \forall j,k$), 2) the time-varying cost (carbon intensity in our case) is non-negative and bounded, and 3) switching cost (energy/emissions to scale the cluster or workloads) is negligible.\footnote{Our experiments in \S \ref{sec:eval} show that even when some of these  optimality conditions do not hold in practice, the oracle still provide significant (but not necessarily optimal) savings, allowing our learning-based approach to effective at runtime.}
\end{proof}

\begin{table}[t]
\caption{State representations and output decisions collected by \systemName from the offline Oracle.}
\label{tab:features}
\resizebox{0.95\linewidth}{!}{%
\begin{tabular}{l|c}
\toprule
\textbf{State} & \textbf{Explanation}\\
\midrule
$CI_t$ & Carbon Intensity in \intensityunit. \\
$CI_t^G$ & Gradient of the CI curve at $t$. \\
$CI_t^R$ & Rank of slot $t$ compared to day-ahead CI. \\
Queue Length & Number of jobs (paused + running) per queue.\\
Elasticity &  Average elasticity across all jobs in the system.\\
\bottomrule \toprule
\textbf{Decision} & \textbf{Explanation}\\
\midrule
 $m_t$ & The cluster capacity at time $t$. \\
 $\rho$ & The minimum used marginal throughput.\\
\bottomrule
\end{tabular}%
}
\end{table}

\subsubsection*{\textbf{Retaining Oracle decisions.}}
~The output of the offline oracle algorithm is visually depicted in ~\autoref{fig:idea} and can be viewed as (i) the provisioned cluster capacity at time $t$, which varies over time, and (ii) how these servers are assigned among jobs, where jobs are not scaled until all jobs are assigned a single resource. Given these decisions over the past time window $T$, \systemName learns the provisioning policy as a mapping from the current system state at each time step $t$ to the cluster size chosen at that step, i.e., a function that maps $STATE \mapsto m_t$. As discussed, the simplest representation of the current system state is the state of the job queues (e.g., the number of queued and running jobs in each queue) and the current carbon intensity values $(CI_t, N_t)$. 

In practice, our approach uses several other parameters to fully capture the current state, as shown in  ~\autoref{tab:features}. These include carbon intensity gradient (whether the carbon intensity is increasing or decreasing), the day-ahead ranking of the $CI_t^R$ (how favorable the current slot is compared to the future CI forecast\footnote{\systemName assumes a carbon information service such as ElectricityMaps \cite{electricity-map} that provides day-ahead CI forecasts.}),
the number of jobs per queue, and the mean elasticity of all jobs in the system. Similarly, the scheduling threshold is computed as a mapping $STATE \mapsto \rho$, which indicates that the oracle scheduling policy only schedules jobs with higher marginal throughput than the threshold. The provisioning and scheduling policy decisions per state mappings are then stored in a knowledge base that is later used in the execution phase. Finally, older mappings from the knowledge base are aged out over a rolling window to adapt to seasonal variations in carbon intensity and any changes in the workload distribution over time.

\subsection{Run-time Provisioning and Scheduling}\label{sec:des_ex}

\systemName's execution phase implements runtime provisioning and scheduling algorithms that optimize carbon emissions while respecting the queue-specific delays. These algorithms use knowledge derived from the oracle's decisions during the offline learning phase to make real-time decisions. At runtime, users submit their jobs to the cluster by selecting a job queue, e.g., by job length.
At the start of each time slot $t$, the \systemName computes the current system state, using the attributes in \autoref{tab:features}, and queries the knowledge base for the top-$\mathbb{k}$ closest matches in terms of similar system states that were seen in the past. \systemName then mimics the decisions of these situations while considering the utility of these decisions in previous time slots. 

\RestyleAlgo{ruled}
\begin{algorithm}[t]
    \footnotesize
    \caption{\texttt{\systemName Provisioning $\phi(.)$}}
    \label{alg:prov_policy}
    \KwIn{$STATE$,  $\mathbb{k}$, Delay violations $v$, Expected distance $\delta$, Violation tolerance $\epsilon$}
    \KwOut{Provisioning Resources $m_t$}
    $\Re \gets $match($STATE$, $\mathbb{k}$)\\
    \If{Distances($\Re)>\delta$ AND $v>\epsilon $}{
        \Return M
    }
    \ElseIf{$v>\epsilon $}{
        \Return Max($\Re.m_t$)
    }
    \Return Mean($\Re.m_t$)
\end{algorithm}

~\autoref{alg:prov_policy} lists how \systemName determines the cluster capacity $m_t$ to provision for the next time slot. First, \systemName uses the current state tuple and queries the knowledge base for the top-$\mathbb{k}$ best matches (e.g., using Euclidean distance). 
It then computes the mean provisioned capacity for the top-k matches and provisions the cluster size accordingly.
Before doing so, it checks the average delay violations $v$ experienced by recently completed jobs (e.g., in the last hour). If the violation exceeds a certain percentage $\epsilon$, and the distance between the $STATE$ and the closest cases is larger than $\delta$, the provisioning function falls back to carbon-agnostic execution and provisions the maximum cluster capacity $M$. 
\RestyleAlgo{ruled}
\begin{algorithm}[t]
    \footnotesize
    \caption{\texttt{\systemName Scheduling $\psi(.)$}}
    \label{alg:scheduling_policy}
    \KwIn{Current time $t$, Current jobs $N_t$, Available resources $m_t$, Marginal throughput threshold $\rho$.}
    \KwOut{Resource allocation $S_t$}
    \textbf{Initialization:} $S \gets \{\}$ and $L \gets [ \ ]$\;%
    \For {$j\in N_t$}{%
        \For{$k \in [k_j^{min}, k_j^{max}]$}{%
            \If{$p_j(k)>\rho$}{
            $L$.append($j, k, p_j(k), a_j + d_j - t$)\;
            }
        }
    }
    $L \gets$ Sort($L$) ; \tcp{w.r.t. $p_j(k)$ then $a_j + d_j - t$}\label{algline:sortingonline}
    \While{$|L|>0$ and $\sum_{j\in N_t}S_{j}[t] <m_t$}{%
        $j,k,*,* \gets L.pop()$; \tcp{next highest $p_j(k)$}
        $S[j] = k$; \tcp{increase allocation of $j$}
    }
    \Return $S$
\end{algorithm}

After ``right-sizing'' the cluster at the start of the time slot $t$,  the scheduling algorithm in \autoref{alg:scheduling_policy} then decides which jobs to schedule and how much to allocate to each scheduled job at run (e.g., every $\Delta t$ or whenever a job arrives or finishes).
To do so, the algorithm iterates over the current jobs and selects all jobs with marginal throughput larger than $\rho$. It also computes marginal throughput at each scale and the available delay budget for each job. Then, it sorts assignments according to their marginal throughput and available delays (\autoref{algline:sortingonline}).   
Finally, the algorithm iterates over the list of schedules, choosing jobs until the current capacity $m_t$ is filled. Similar to \autoref{alg:offline}, jobs are not scaled until all jobs are given $k_j^{min}$ resources, ensuring high efficiency and avoiding starvation.

\section{CarbonFlex Implementation} \label{sec:imp}
We implement \systemName using AWS ParallelCluster~\cite{aws-pc}, a cluster management tool that deploys and manages high-performance computing (HPC) Slurm~\cite{2003slurm} clusters in the cloud.
AWS ParallelCluster uses EC2 instances that span various hardware configurations, networks, and accelerators. We implement \systemName using PySlurm~\cite{pyslurm} as a Slurm interface that submits workloads to the cluster according to our underlying provisioning and scheduling policies. 
Our prototype, available at (\url{https://github.com/umassos/CarbonFlex}), has the following components:

\noindent\emph{\textbf{Continuous  Learning:}}
We implement our \systemOracle on a simulation environment using Python. The \systemOracle utilizes the historical workload traces, scheduling profiles, configurations, and historical carbon intensity data to calculate the carbon optimal schedule and compute the historical ($STATE \mapsto m_t, \rho$) mappings. 

\noindent\emph{\textbf{Runtime Provisioning:}}
At the start of each slot, which we set as 1 hour, \systemName computes the number of provisioned servers and scheduling configurations by following the decisions made by the offline oracle. Our implementation relies on Case-Based Reasoning that finds solutions by establishing 
similarities between current and historical states and executing similar actions while retaining the ability to interpret or understand confidence in the recommended solution~\cite{WM1994:CBRBook}. Our implementation finds similar states using KNN from the Scikit-learn library~\cite{scikit-learn}, where we utilize Euclidean distance and represent the historical cases in a KD-Tree for fast access, select the nearest five instances $\mathbb{k}=5$ and combine them as detailed in ~\autoref{alg:prov_policy}.

\noindent\emph{\textbf{Elastic Scaling and Scheduling:}}
After computing available capacity, 
\systemName schedules workloads based on their marginal capacity, as detailed in \autoref{sec:des_ex}.  
When the capacity or number of queued jobs changes or every $\Delta t$, e.g., 5 minutes, \systemName computes each job's resource assignment and scale using \autoref{alg:scheduling_policy}, which can be efficiently implemented using a binary tree. To run a job, \systemName submits it as a Slurm job to the appropriate job queue using PySlurm.  
Finally, to scale jobs, \systemName uses the \texttt{scancel} command, which signals the job to checkpoint its state and submits a new job based on the new scale, resuming the progress. We report on these overheads in \autoref{sec:eval_overhead}.

\noindent\emph{\textbf{Energy and Carbon Monitoring:}}
As stated earlier, \systemName's focuses on operational emissions, which constitute the majority of emissions in datacenters~\cite{world_bank_green_2023, Malmodin2024:ICTElectricity}. Considering embodied emissions in carbon-aware scheduling, is subject to the sunk cost fallacy~\cite {Bashir2024:Fallacy}.
We compute the operational carbon emissions of the cluster at time $t$, denoted as $\mathcal{C}_t$, as follows:
{\begin{align} \label{eq:con_scaling}
    \mathcal{C}_t = \sum_{j}^{N_t} E_{js} \times c_t \\
    E_{js} = E_{js}^R + E_{js}^{net} \\
    E_{js}^{net} = \eta_{net}\times Mem_{js}
\end{align}
}
where $E_{js}$ represents the energy consumption by job $j$ at scale $s$, which consists of compute and network components, denoted as $E_{js}^R$ and $E_{js}^{net}$, respectively.\footnote{A job can take multiple scales or utilize part of the time slot, details we omitted in this model for clarity.}
$E_{js}^R$ takes into account the number of compute resources (e.g., CPU cores) and can be augmented to include energy consumption for memory, base power, and PUE. However, due to the challenges in accurately assessing power consumption per tenant in data centers, our CPU experiments involve clusters where these numbers are often not known, and we assume a fixed value per resource, a common approach in carbon accounting~\cite{Gsteiger2024:Caribou, teads2021carbon, Lannelongue2021:GreenAlgorithms}. 
Further, our results emphasize normalized savings, making absolute energy or carbon values less significant. In contrast, for GPU experiments, we utilize \texttt{nvidia-smi}, which allows us to measure the energy usage for each GPU and aggregate it across the scale employed by this job.

To account for the network cost $E_{js}^{net}$, we utilize the network energy efficiency ($\eta_{net}$) measured in $W/Gbps$ and $Mem_{js}$ is the amount of data transferred by the job $j$ at scale $s$. Since $\eta_{net}$ widely varies in prior work, by up to three orders of magnitude (e.g., due to network type and topology)~\cite{Tabaeiaghdaei2023:GreenRouting, Jacob:GrowOld, Heddeghem2012:optical_networks}, our experiments utilize a value of 0.1$W/Gbps$. 
Lastly, we compute $Mem_{js}$ based on the communication paradigm used for the workload as per its implementation and memory requirements (e.g., distributed training~\cite{pytorch} uses \texttt{Ring-Allreduce}).

\noindent\emph{\textbf{Simulation Environment:}}
Lastly, we integrate the online and offline scheduling policies and the baselines into a simulation environment, denoted as \systemNameSim, which enables year-long evaluation.

\section{Experimental Evaluation} \label{sec:eval}
This section evaluates the performance of our \systemName prototype and its provisioning and scheduling policies based on its carbon savings and delay under different scenarios. 
We evaluate \systemName's carbon emissions using real-world CPU and GPU clusters on AWS ParallelCluster~\cite{aws-pc}. Then, we augment our prototype evaluation with additional simulations that leverage \systemNameSim. Lastly, we present a sensitivity analysis and system overheads.

\begin{figure}[t]
    \centering
    \includegraphics[width=\linewidth]{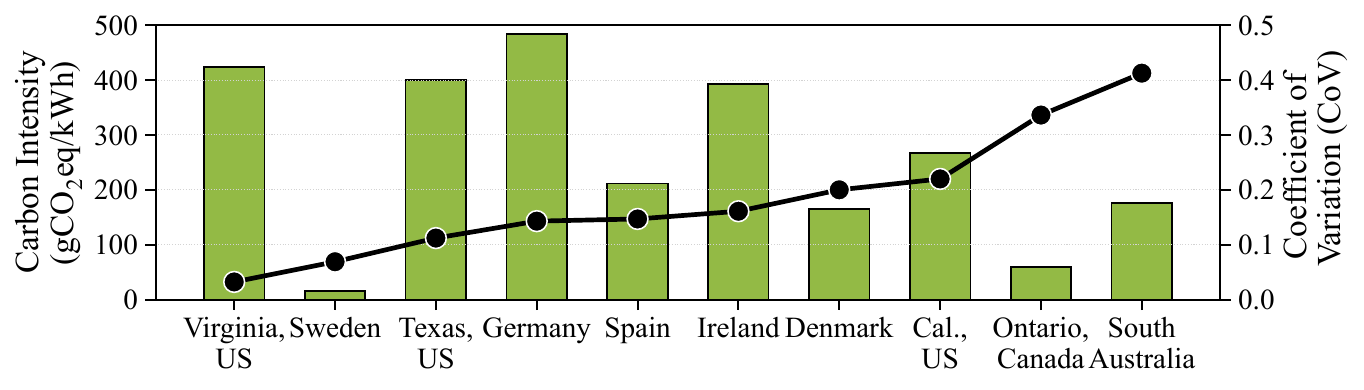}
    \caption{Diversity in selected Carbon Intensity traces.}
    \Description{Diversity in selected Carbon Intensity traces.}
    \label{fig:carbon_trace_ccs}
\end{figure}

\subsection{Experimental Setup}\label{sec:eval_setup}
\noindent\emph{\textbf{Workload Traces.}} 
Our experiments use three workload traces: a month-long \emph{Azure} trace~\cite{azure-data-paper}; a two-month \emph{Alibaba-PAI} trace~\cite{weng2022mlaas}; and a year-long \emph{SURF Lisa-HPC} trace~\cite{Chu2024:SURF}, each of which has different arrival patterns and job lengths.
In our experiments, we focused on hour+ workloads, as shorter jobs have minor contributions to the total compute time, and they are usually not delay-tolerant.
Then, we sample these traces by creating \emph{historical} and \emph{evaluation} traces. We split the \emph{historical} trace into two week-long traces and used them for the learning phase, while the \emph{evaluation} trace is a week-long trace used to evaluate our proposed approaches. We sample these traces from different parts of the trace. 
For instance, we utilize the first two weeks of the \emph{Azure} for sampling the \emph{historical} trace and the third week to sample the \emph{evaluation} trace. In contrast, for more extended traces such as \emph{Alibaba-PAI} traces that span two months, we utilized the first 7 weeks for learning and the 8th week for evaluation.
Finally, unless otherwise stated, we randomly assign the elasticity profiles (see \autoref{tab:workloads}) to the workloads.

\begin{table}[t]
\caption{Details of the elastic workloads in evaluation.}
\label{tab:workloads}
%\footnotesize
\resizebox{.95\linewidth}{!}{%
\begin{tabular}{l|c|c|c}
\toprule
\textbf{Workload} & \textbf{Impl.} & \textbf{Comm. Size} & \textbf{Scalability}\\
\midrule
$N$-body($N\mathord{=}
100k$) ~\cite{nbodysimulation} & MPI & 5.3 MB$^*$ & High\\
$N$-body($N\mathord{=}10k$) ~\cite{nbodysimulation} & MPI & 0.53 MB$^*$ & High\\
$N$-body($N\mathord{=}3k$) ~\cite{nbodysimulation} & MPI &  0.16 MB$^*$ & Moderate\\
$N$-body($N\mathord{=}2k$) ~\cite{nbodysimulation} & MPI & 0.1 MB$^*$ & Moderate\\
Jacobi($N\mathord{=}3k$) \cite{verschelde2016parallel} & MPI & 51.2 MB$^*$ & Low\\
Jacobi($N\mathord{=}2k$) \cite{verschelde2016parallel} & MPI & 28.6 MB$^*$ & Low\\
Jacobi($N\mathord{=}1k$) \cite{verschelde2016parallel} & MPI & 7.16 MB & Low*\\
AlexNet \cite{AlexNet} & Pytorch &   233.1 MB& Low\\
ResNet18 \cite{he2016resnet} & Pytorch&   44.7 MB & Low\\
ResNet50\cite{he2016resnet} & Pytorch &    97.8 MB & Moderate\\
ResNet101\cite{he2016resnet} & Pytorch &    170.5 MB & High\\
EffNet-S~\cite{tan2021efficientnetv2} & Pytorch &   82.7 MB & High\\
ViT-B/32\cite{vit} & Pytorch &   336.6 MB & Moderate\\
\bottomrule
 \multicolumn{4}{l}{$^*$ This application present our least scalable workload (See ~\autoref{fig:elasticity}).}\\
\end{tabular}%
}
\end{table}

\noindent\emph{\textbf{Elastic Workloads}}
~\autoref{tab:workloads} describes our CPU and GPU workloads implemented using MPI~\cite{mpi} and Pytorch~\cite{pytorch}, respectively. The table presents the workload names, communication sizes in MB, and scalability, categorizing applications as High, Moderate, or Low scalability jobs. 
We obtain profiles through one-time profiling that iterates over possible nodes between $[k^{min}, k^{max}]$ and runs for a brief duration (typically a few minutes).
In our current experiments, we profiled workloads on AWS at various scales. CPU loads were profiled between [$k^{min}=1, k^{max}=16$] CPU cores, while GPU loads were profiled from [$k^{min}=1, k^{max}=8$] due to limitations in GPU capacity.

\noindent\emph{\textbf{Carbon Traces.}} We used hourly carbon intensity traces from Electricity Maps~\cite{electricity-map} for December 2021 to December 2022 for 10 geographical regions. ~\autoref{fig:carbon_trace_ccs} shows the mean carbon intensity and daily variability, measured by the Coefficient of Variation (CoV) throughout this period, where regions with higher CoV often depend on intermittent energy sources such as renewables. 
As shown, the selected regions represent possible situations of average carbon intensity and daily variability in carbon intensity. Finally, we assume knowledge of day-ahead carbon intensity, as prior work demonstrates that such forecasts are highly accurate~\cite{carboncast}.

\begin{figure*}[t]
    \centering
    \hfill
    \begin{subfigure}[b]{0.45\linewidth}
        \includegraphics[width=\linewidth]{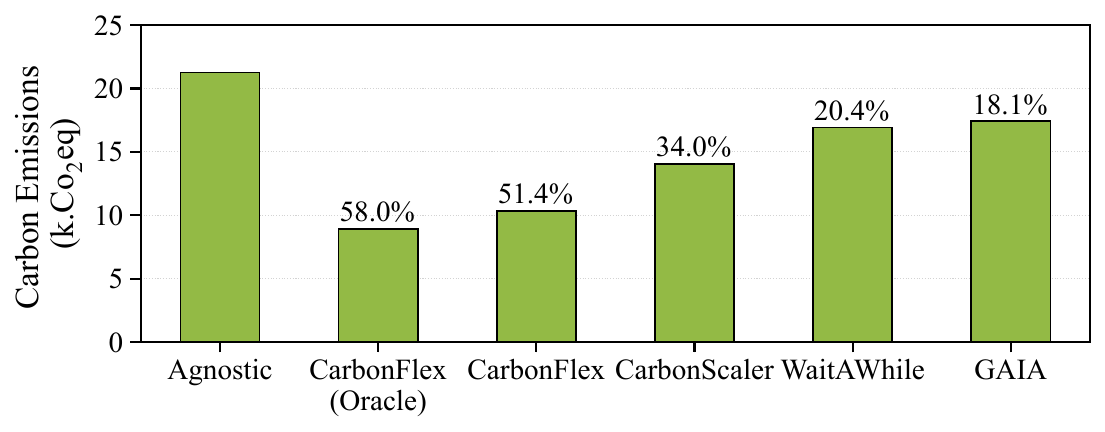}
        \caption{Carbon Emissions and Savings (on-top)}
        \label{fig:perf_cpu_cluster_carbon}%
    \end{subfigure}
    \hfill
    \begin{subfigure}[b]{0.45\linewidth}
        \includegraphics[width=\linewidth]{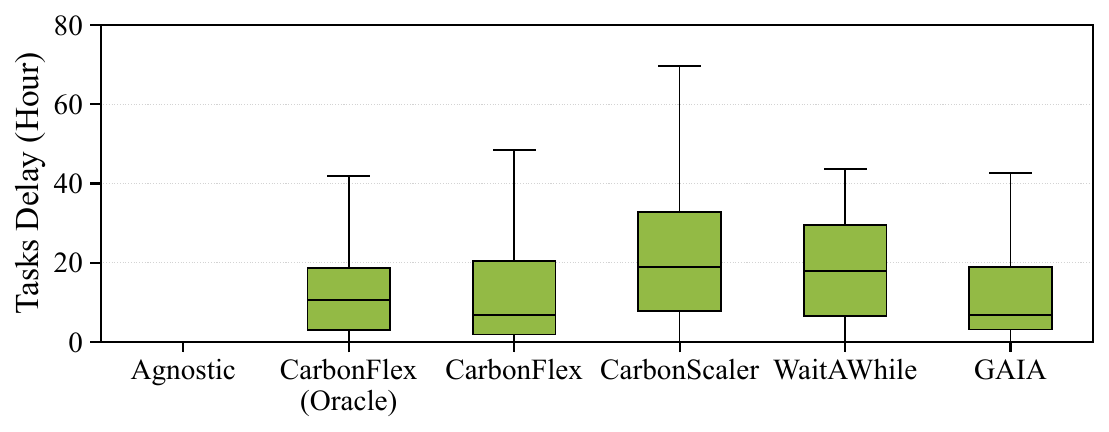}
        \caption{Delay}
        \label{fig:perf_cpu_cluster_delay}%
    \end{subfigure}
    \hfill
    \hfill
   \caption{Carbon emissions (a) and delay (b) across carbon-aware scheduling approaches for the CPU cluster.}
   \Description{Carbon emissions (a) and delay (b) across carbon-aware scheduling approaches for the CPU cluster.}
    \label{fig:perf_cpu_cluster}
\end{figure*}

\noindent\emph{\textbf{Baselines.}} We compare our \systemName with 5 state-of-the-art carbon-aware scheduling baselines for homogeneous resources. For fairness, we assume that all baselines have access to historical traces and can use the mean job length for computing the schedule: 
\begin{enumerate}[leftmargin=*]
\item \textbf{\agnostic}: This policy represents the status quo, where jobs are scheduled FCFS without elastic scheduling. We use this policy as a baseline to compute the carbon savings for all other policies.
\item \textbf{\GAIA}~\cite{Hanafy2024:GoingGreen}: We utilize GAIA's Lowest-Window Policy, which schedules jobs in a non-elastic manner by selecting the best start time based on the mean job length within a time window $d$ to minimize carbon emissions. We augment the policy with resource limits and use FCFS when multiple jobs want to run in the same time slot.

\item \textbf{\WaitAWhile}~\cite{Wiesner2021:WaitAwhile}: 
We implement the threshold version of the 
\WaitAWhile policy, which operates the job in a suspend-resume fashion according to carbon intensity. The threshold is determined by the 30th percentile of carbon intensity predictions for the next 24 hours. To meet SLO requirements, the job runs to completion after the permitted delay. We use FCFS when multiple jobs want to run in the same time slot.

\item \textbf{\CarbonScaler}~\cite{Hanafy2023:CarbonScaler}: We adapted the  \CarbonScaler algorithm to run at a multi-job cluster, where the schedule is computed based on historical job length. In addition, to respect the cluster-wide capacity, we prioritize scaling jobs with higher marginal throughput. Lastly, when the job surpasses its allowed delay, it runs until completion.

\item \textbf{\systemOracle}: Finally, we added the offline oracle as a baseline that implements ~\autoref{alg:offline} and assumes full knowledge of carbon intensity and job length.

\end{enumerate}

\noindent\emph{\textbf{Deployment.}} We deployed \systemName in AWS and evaluated it on a CPU and GPU cluster. In our CPU cluster, we utilize 150 \texttt{C8} VMs, yielding a mean utilization of $\sim$50\%, the common utilization across clusters ~\cite{Shehabi2024:USDCReport}. In contrast, for the GPU cluster, our resource quota only allowed for 15 \texttt{G6} GPUs, so we limited the sampling to ensure similar utilization for this cluster size. To simulate the behavior of \systemOracle, specifically the learning phase, we replay the available historical trace with different start times, a step that helps improve the performance of \systemName. Finally, we augment our evaluation with year-long assessments using \systemNameSim to evaluate many different scenarios and settings. 
Note that, in all experiments, unless otherwise stated, we utilize the carbon intensity trace of South Australia; clusters have 50\% utilization as reported utilization in real-world clusters~\cite{Tirmazi2020:Borg, Shehabi2024:USDCReport, Shehabi2016:USDCReport}, which results in a maximum cluster capacity of 150 for CPU clusters and 15 for GPU clusters; and that the cluster has three length-based queues with $d=6hrs, 24hrs$, and $48hrs$ for short ($l\leq2$hrs), medium ($2< l\leq12$hrs) and long ($l>12$hrs) jobs.

\subsection{Optimizing Carbon Emissions}
In this section, we evaluated \systemName's ability to optimize a cluster's carbon emissions under different configurations and compute types.

\noindent\emph{\textbf{CPU Cluster.}}
First, we evaluate the performance of \systemName using our prototype on AWS using the \texttt{C8} instances CPU-cluster, with $M=150$. ~\autoref{fig:perf_cpu_cluster} shows the total carbon emissions of our cluster under different scheduling baselines. 
As shown, \systemName can reduce the carbon emissions by 51.4\% (only 6.6\% away from the \systemOracle) and achieves 17.4\%, 31\%, and 33.3\% higher savings than \CarbonScaler, \WaitAWhile, and \GAIA, respectively. In addition, the results show that approaches that use scaling (e.g., \systemName and \CarbonScaler) offer higher savings as they can better utilize variations in carbon savings, while approaches that use suspend-resume scheduling (e.g., \WaitAWhile) perform better than those that do not consider preemption. 

\autoref{fig:perf_cpu_cluster_delay} shows the delay experienced across baselines, where the \agnostic baseline exhibits no waiting and  \systemOracle respects all SLOs.
As shown, policies typically respect the delay, where all approaches are configured to run to completion once the allowed delay period is over. The highest delays, however, are exhibited by scale-based approaches (e.g.,  \systemName and \CarbonScaler) as the use of provisioning in \systemName may limit the cluster capacity and limit how jobs are scheduled, leading to an average delay of 17.5 hours. 
In addition, \CarbonScaler may under-predict job length and delay it beyond the allowed delay, requiring the job to run beyond its allowed delay, leading to an average delay of 22.3 hours. 
Finally, it is worth noting that \systemName will often have a lower average delay, as \systemOracle can be aggressive in its carbon-aware scheduling decisions, delaying the jobs to the maximum possible time.

\begin{figure}[t]
    \centering
    \includegraphics[width=0.9\linewidth]{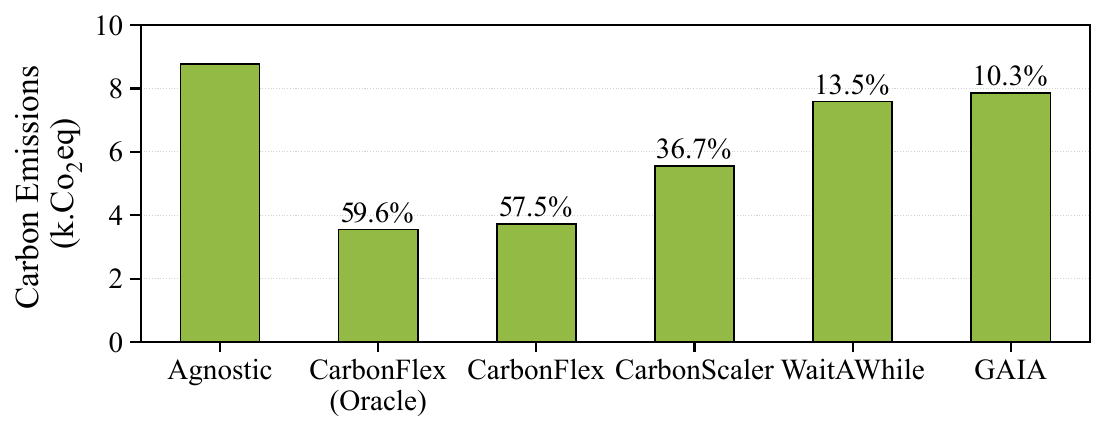}
    \caption{Carbon emissions and savings (on-top) across carbon-aware scheduling approaches in a GPU cluster.}
    \Description{Carbon emissions and savings (on-top) across carbon-aware scheduling approaches in a GPU cluster.}
    \label{fig:gpu_cluster}
\end{figure}
\noindent\emph{\textbf{GPU Cluster}}
~\autoref{fig:gpu_cluster} shows the carbon emissions and savings using our prototype evaluation on 15 \texttt{G6} GPU cluster on AWS. As shown, \systemName significantly reduces carbon emissions, achieving 57.5\% savings, which is 2.1\% from the \systemOracle. As in the CPU cluster, \systemName is able to reduce carbon emissions by 20.8\%, 44\%, and 47.2\% compared to \CarbonScaler, \WaitAWhile, and \GAIA. 
Interestingly, the results reveal that in our GPU cluster --- where applications exhibit inherently heterogeneous power consumption --- approaches that use scaling can achieve higher carbon savings than the baseline methods relying on temporal shifting techniques. This occurs because scaling approaches prioritize workloads with higher marginal throughput during low-carbon periods (i.e., low communication per unit compute), which typically consume more power. Consequently, directing applications with higher power usage to low-carbon periods further enhances our savings.

\noindent\emph{\textbf{Key Takeaways:} 
On CPU and GPU clusters,
\systemName yields carbon savings up to 57.5\% and 20.8\% compared to \agnostic and \CarbonScaler, respectively. 
}

\subsection{Effect of Configurations}
This section demonstrates how cluster configurations (e.g., delay) affect the carbon savings and \systemName's ability to adapt its decisions per these configurations.

\begin{figure}[t]
    \centering
    \includegraphics[width=0.9\linewidth]{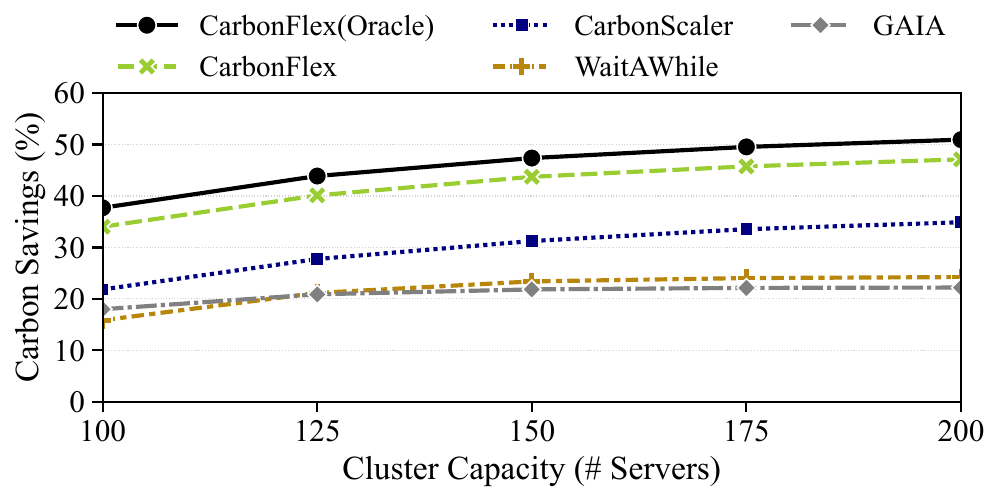}
    \caption{Impact of the maximum cluster capacity on the carbon savings.}
    \Description{Impact of the maximum cluster capacity on the carbon savings.}
    \label{fig:cluster_size}
\end{figure}

\noindent\emph{\textbf{Effect of Cluster Capacity}}
The maximum cluster capacity represents the headroom available to stack workloads during low-carbon periods, reducing the total carbon emissions. ~\autoref{fig:cluster_size} demonstrates the effect of headroom represented in terms of the maximum allowed cluster capacity limit, denoted as $M$, where $M$ = 100, 150, and 200, which represents $\sim$75\%, $\sim$50\%, and $\sim$37\% utilization, respectively. 
As shown, \systemName closely follows \systemOracle across all cluster capacities, achieving between  $\sim$3.7\% from the \systemOracle. In addition, \systemName outperforms other approaches, such as \CarbonScaler with up to 12.5\% savings.
%10.6\%, 11.6\%, 12.5\%, 12.1\%, 12.2\%
Moreover, the figure shows that using elastic scheduling can better utilize the available capacity and further reduce carbon emissions. In contrast, approaches that only rely on temporal shifting increase the carbon savings by 8.4\%. 
Moreover, the results show that increasing the cluster size comes with diminishing returns, where increasing the maximum cluster capacity from 100 to 200 by 13.2\% and 13\% from the \systemOracle and \systemName, respectively.
Lastly, as detailed in the previous work~\cite{Hanafy2023:War, Hanafy2023:CarbonScaler}, elastic scaling introduces cost overheads, where increasing the cluster comes with increases in the total operational cost as applications run with lower marginal throughput. However, such overheads were negligible where the carbon overheads across all methods and cluster sizes were lower than 3.2\%.

\begin{figure}[t]
    \centering
    % Answer: [trim={left bottom right top},clip]
    \includegraphics[trim={1cm 7.2cm 0 0},clip, width=0.45\textwidth]{figures/cluster_sizes_Azure-mix2_final-week_S_carbon_savings.pdf}\\
    \hfill
    \begin{subfigure}[b]{0.45\linewidth}
    \includegraphics[width=\textwidth]{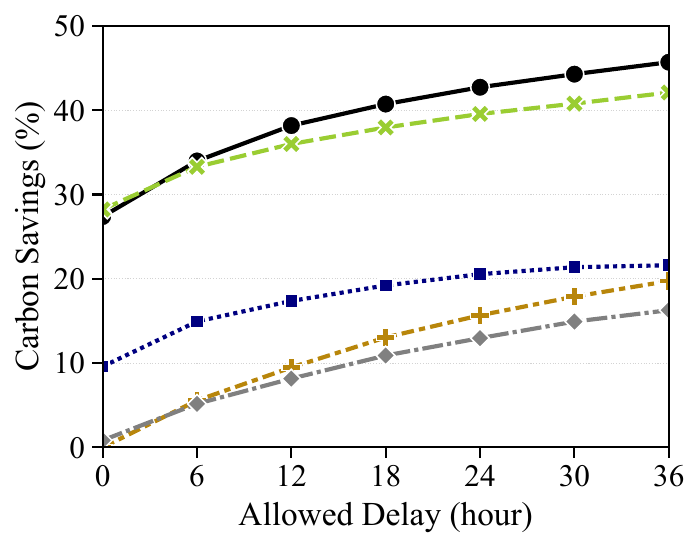}
    \caption{Carbon Savings (\%)}
    \label{fig:delay_carbon}
    \end{subfigure}
    \hfill
    \begin{subfigure}[b]{0.45\linewidth}
    \includegraphics[width=\textwidth]{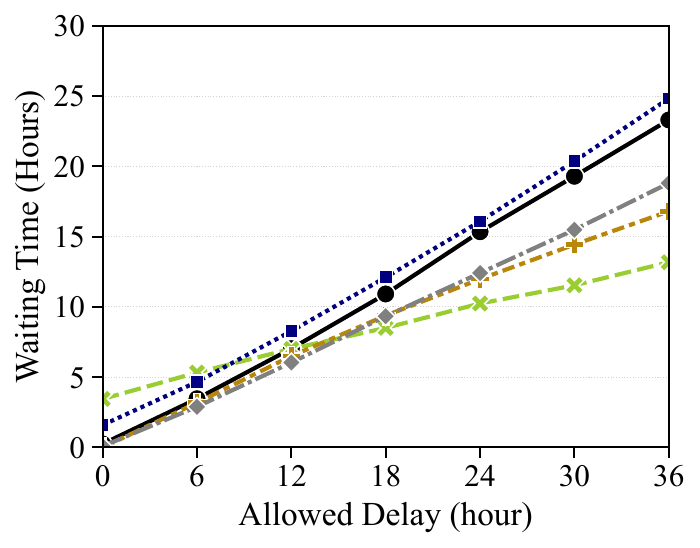}
    \caption{Waiting Time}
    \label{fig:delay_waiting}
    \end{subfigure}
    \hfill
    \hfill
    \caption[Impact of the allowed delay.]{Impact of the allowed delay (slack) on the carbon savings (a) and waiting time (b).}
    \label{fig:delay}
\end{figure}

\noindent\emph{\textbf{Effect of Delay}}
The delay represents the scheduling flexibility of workloads, a key aspect of carbon savings. 
\autoref{fig:delay} shows the impact of extending the delay on the carbon savings and waiting time of \systemName and other baselines, assuming that queues have the same delay. We change the allowed delay per job from 0 hrs (using only elasticity) to $36$hrs. \autoref{fig:delay_carbon} shows that increasing the allowed delay to $d=36$hrs results in carbon savings of 18.3\% and 13.8\% for \systemOracle and \systemName, respectively. Notably, the figure shows that \systemName follows the \systemOracle, where it achieves carbon savings within 3.6\% of \systemOracle's savings. 
The figure also shows how other baselines behave under different temporal flexibilities. 
For instance, approaches such as \WaitAWhile, which only depend on temporal shifting, result in no carbon savings when $d=0$ and only reduce carbon savings by 19.7\% when $d=36$. In contrast, approaches that utilize elasticity (e.g., \systemOracle) achieve much higher savings compared to non-elastic baselines. 

~\autoref{fig:delay_waiting} shows the average waiting time across the cluster across different baselines. As expected, as the allowed delay increases, so does the waiting time. The figure shows that for the small allowed delays, \systemName and \CarbonScaler violate the allowed delay by 3.4 and 1.6 hours, which explains the increase of carbon savings over \systemName in ~\autoref{fig:delay_carbon}. Moreover, although not visible, some of the jobs in the oracle also exceed the deadline (i.e., a non-feasible schedule), which we fix by extending the delay for these specific jobs. However, as the allowed delay increases, \systemName requires less delay as it greedily schedules resources at the first possible moment resources are available. 
Lastly, the figure shows that, across baselines, increasing the delay increases carbon savings but with diminishing returns ~\cite{Hanafy2023:CarbonScaler, Sukprasert2024:Limitations}.

\noindent\emph{\textbf{Key Takeaways:} \systemName can incorporate different configurations in its provisioning and scheduling decisions, outperforming other baselines and achieving savings that are within 3.6\% of the \systemOracle.}

\subsection{Effect of Workload Characteristics}\label{sec:eval_ccs}
Besides the scheduling configuration, the characteristics of the workload traces (e.g., arrival rates or job scalability) affect the potential carbon savings. In this section, we assess the impact of workloads' elasticity and workload traces. 
\begin{figure}[t]
    \centering
    \includegraphics[width=0.9\linewidth]{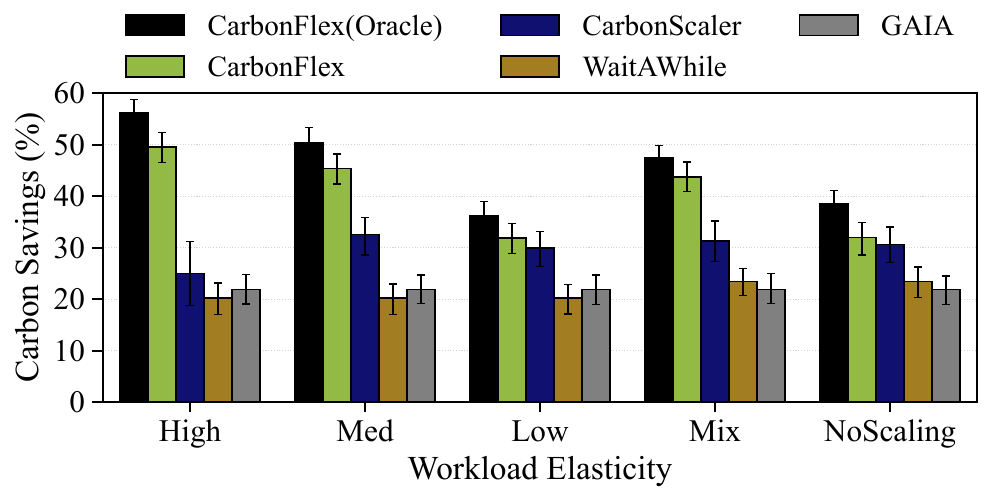}
    \caption{Workload elasticity impact on carbon.}
    \Description{Impact workload elasticity on carbon savings.}
    \label{fig:elasticity_multi}
\end{figure}

\noindent\emph{\textbf{Effect of Jobs Elasticity}}
The elasticity of workloads is crucial for achieving significant carbon savings, as it enables workloads to take advantage of periods of low carbon intensity.
~\autoref{fig:elasticity_multi} illustrates the impact of workload elasticity, comparing carbon savings across workloads with varying characteristics. We explore three scenarios in which we assume that all jobs exhibit specific scaling behaviors using $N$-body($N\mathord{=}100k$), $N$-body($N\mathord{=}2k$), and Jacobi($N\mathord{=}1k$) denoted as high, moderate, and low elasticity, (see ~\autoref{tab:workloads}).  Additionally, we use our primary scenario of randomly assigning profiles to workloads, referred to as ``Mix,'' and a ``NoScaling'' scenario, which highlights the benefits of \systemName's resource provisioning, in situations where jobs can only be paused but not scaled.
As demonstrated, workloads with enhanced scaling can attain greater carbon savings, reducing carbon savings of up to 56.1\% and 49.5\%  for the highly scalable workloads under \systemOracle, and \systemName, respectively. 
In addition, aside from the scaling profile, \systemName resembles \systemOracle's performance,  achieving within 3.4\% and 6.6\% of its savings.

Moreover, the figure shows the benefits of \systemName's historical learning approach, where even without scaling, \systemName can achieve higher carbon savings than baselines, achieving 1.4\% more savings than \CarbonScaler, which acts suspend-resume.
Lastly, the figure illustrates how different approaches perform under different elasticity profiles. For instance, it shows that \CarbonScaler cannot take advantage of high elasticity, performing significantly worse than other baselines. The reason is that all workloads adopt similar schedules, causing them to run during higher carbon periods and default to the lowest scale run-to-completion behavior. In contrast, baselines that do not use scaling (e.g., \WaitAWhile) have a consistent behavior apart from the workloads' elasticity behavior.

\begin{figure}[t]
    \centering
\includegraphics[width=0.9\linewidth]{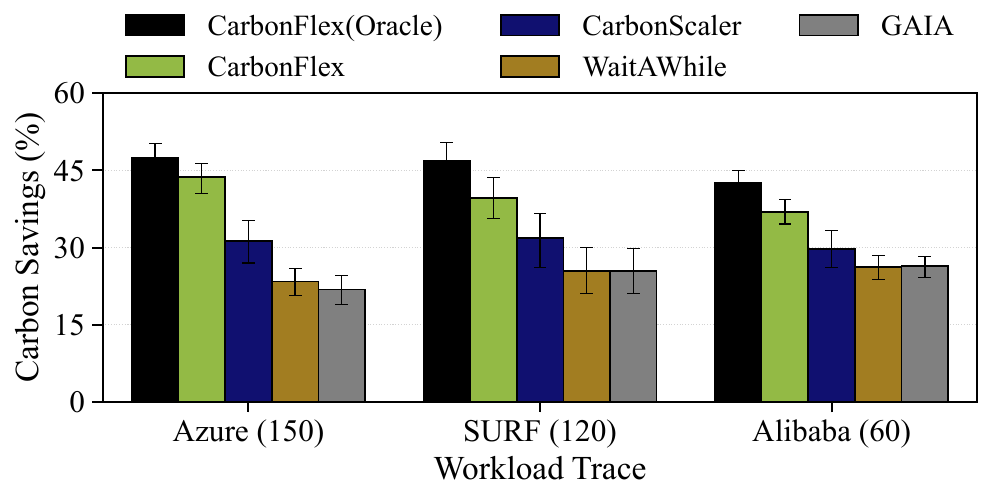}
   \caption{Carbon Savings across workload traces.
   }
   \Description{Carbon Savings across workload traces with different job lengths and arrival patterns.}
    \label{fig:traces_effect}
\end{figure}

\noindent\emph{\textbf{Workload Traces}}
The characteristics of the trace (e.g., average job length) dictate the potential carbon savings and the benefits of elastic scaling. 
\autoref{fig:traces_effect} illustrates the carbon savings across the \emph{Azure} trace~\cite{azure-data-paper}, \emph{Alibaba} trace~\cite{weng2022mlaas}, and \emph{SURF} trace~\cite{Chu2024:SURF}, where a maximum cluster capacity is selected to achieve 50\% utilization. As shown, \systemName can attain significant carbon savings across traces, ranging from 43.7\% (3.6\% from \systemOracle) for the \emph{Azure} trace to 36.9\% (5.7\% from \systemOracle) for the \emph{Alibaba} trace.  
The reason for these differences can be traced back to variations in job length, as \emph{Azure} has a higher average job length compared to the other traces. This is also reflected in the disparities between elastic and non-elastic scheduling approaches, as shorter jobs do not benefit from scaling or interruptibility. This is evident in the difference in carbon savings between \systemName and \GAIA, which decreases from 22.3\% in \emph{Azure} to 11.6\% in \emph{Alibaba}, as well as between \WaitAWhile and \GAIA.

\noindent\emph{\textbf{Key Takeaways:} 
\systemName achieves high carbon savings across workloads with different elasticity and length distributions. Our results demonstrate that our historical learning approach is beneficial without elastic scaling.
}

\begin{figure}
    \centering
    \includegraphics[width=\linewidth]{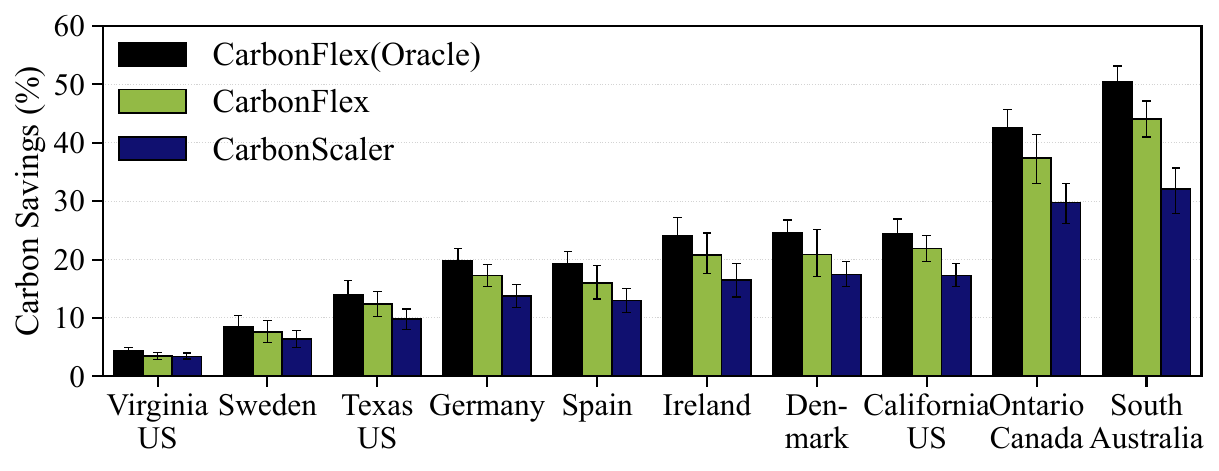}
    \caption{Carbon Savings (\%) across locations under multiple job queues.}
    \Description{Carbon Savings (\%) across locations under multiple job queues.}
    \label{fig:locations_multi}
\end{figure}

\begin{figure}
  \centering
    \includegraphics[width=0.6\linewidth]{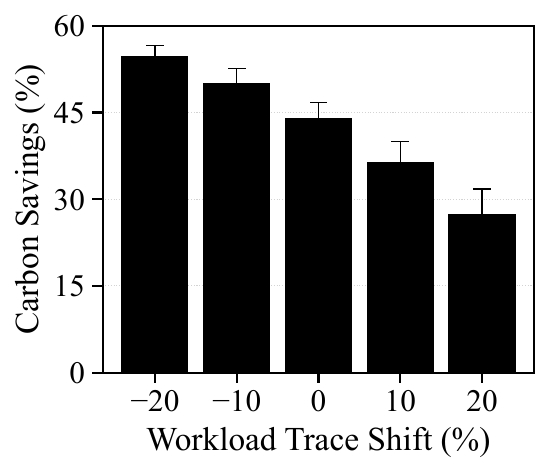}
        \caption{Impact of distribution shifts.}
        \Description{Impact of distribution shifts.}
        \label{fig:trace_shift}%
\end{figure}
\begin{figure}
    \centering
    \includegraphics[width=0.9\linewidth]{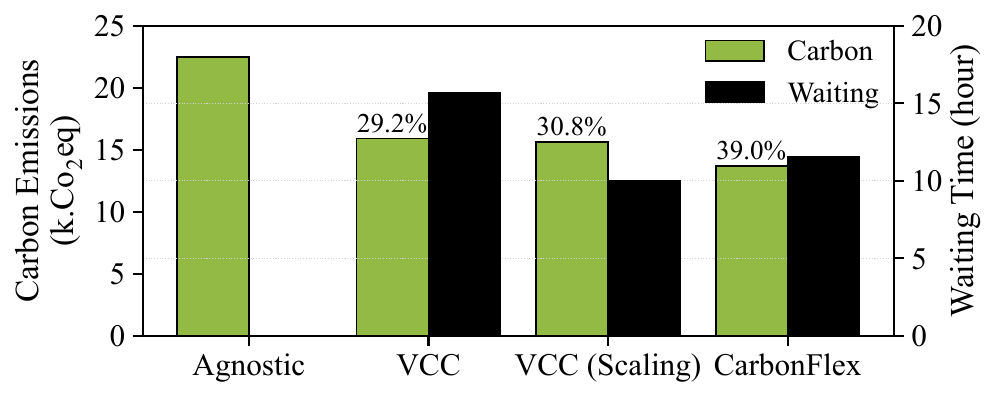}
    \caption{Comparing \systemName with carbon-aware capacity provisioning.}
    \label{fig:vcc_compare}
\end{figure}

\subsection{Effect of Cloud Location}

As noted in ~\autoref{sec:background_CI}, the supply mix significantly affects optimizing carbon emissions, where locations with a variable carbon intensity typically result in higher carbon savings. 
\autoref{fig:locations_multi} shows the carbon savings across ten locations sorted by the achievable carbon savings. As shown, \systemName highly matches the carbon savings of \systemOracle, where it achieves 0.9\% and 6.31\% within its carbon savings. Moreover, as shown, the carbon savings are strictly a function of the carbon intensity variability, where locations with highly variable carbon intensity (see ~\autoref{fig:carbon_trace_ccs}) have higher savings than locations such as Virginia, US, where in 2022, 85\% of its electricity consumption came from non-variable sources (Natural Gas 54\% and Nuclear 31\%) ~\cite{eia_virginia}, resulting in limited saving opportunities. Lastly, the figure also shows that as the variability increases, the difference between \systemName and \CarbonScaler also increases, highlighting the impact of \systemName in reducing carbon.

\subsection{\textbf{Effect of Workload Distribution Shifts}}
Another key assumption in \systemName is that historical and real-time workload traces share some resemblance. Although this is mostly true, \systemName's continuous learning strategy will quickly pick up on such changes. \autoref{fig:trace_shift} illustrates the case where we alter distribution shifts by increasing the inter-arrival rate and job length between -20\% and 20\%, leading to changes in cluster utilization, where zero means the original trace. 
As shown, the decreases in arrival rates and job lengths allow \systemName to reduce carbon emissions further as the average utilization of the cluster becomes lower, where carbon savings increase by 10.1\%. In contrast, when the arrival rate is higher, the potential carbon savings decrease, reaching 26\%.

\subsection{Carbon-aware Provisioning}
In addition to carbon-aware scheduling approaches, researchers have proposed carbon-aware provisioning to reduce the total emissions of data centers, which include both interactive and batch applications ~\cite{Radovanovic2023:VCCPaper} and demonstrate the impact of load shifting on the grid~\cite{Lin2023:Adapting}. Despite the differences in scope, we illustrate that \systemName is interoperable with other provisioning approaches, which highlights the benefit of \systemName's separation of provisioning and scheduling. ~\autoref{fig:vcc_compare} presents the performance of carbon-aware provisioning approaches. Our baselines include a carbon-aware provising approach, which computes the provisioning based on the VCC approach~\cite{Radovanovic2023:VCCPaper} and schedules workloads in an FCFS manner, and VCC (Scaling) that creates a VCC curve while allowing elastic scaling. 
We include the results from \systemName (where we set the delay for 24 hours for all jobs to ensure a fair comparison) for reference. As shown, our proposed elastic scaling approaches enhance the performance of VCC by lowering the carbon emissions by 1.6\%, while decreasing the average waiting time by 36\%.

\subsection{System  Overheads}\label{sec:eval_overhead}
Lastly, we used our prototype to quantify the cost and system overheads of \systemName. We found that running the offline oracle for a week-long trace typically took between 2 and 10 minutes, depending on the trace size and the number of jobs. 
Matching the current system state with the closest states from the oracle required between 1 and 2 ms. 
Our one-time profiling of workloads used 30 seconds for each of the maximum allowed 16 servers for CPU workloads and 1 minute for the maximum allowed 8 GPU workloads, resulting in 8 minutes per workload and totaling approximately 2 hours. The overhead of Checkpoint/Restore utilized in scaling depends on the application's memory footprint~\cite{flint}. The application with the highest memory usage, ViT-B/32 (see \autoref{tab:workloads}), took 2 seconds and 0.3 seconds for checkpoint and restore, respectively. Lastly, provisioning EC2 instances incurs time overheads, taking 3 minutes for our \texttt{C8} CPU instances and 5 minutes for our \texttt{G6} GPU instances.

\section{Related Work} \label{sec:related}

We discuss related work in carbon-aware and elastic workload scheduling.

\noindent\textbf{Carbon-aware Scheduling.}
~Prior work has implemented carbon-aware schedulers for batch workloads, where researchers proposed workload-shifting methods to optimize the carbon emissions of an individual job
~\cite{Hanafy2023:CarbonScaler, 
Souza2023:Ecovisor, Sukprasert2024:Limitations, 
Dodge2022:AICloud, Wiesner2021:WaitAwhile, Lechowicz2023:DTPR}, a data center ~\cite{Radovanovic2023:VCCPaper, Zhang2021:VariableCapacityChallanges, Perotin2023:Risk, Lin2023:Adapting}, or cloud clusters ~\cite{Hanafy2024:GoingGreen}. In contrast to these approaches, which either focus on carbon-aware scheduling or capacity provisioning, \systemName, combines these approaches to optimize carbon emissions further. 

\noindent \textbf{Elastic Workload Scheduling.}Previous work utilized elastic scheduling~\cite{Tarraf2024:Malleability, Gupta2014:RealizingMalleable, Prabhakaran2015:Malleable}, to optimize the makespan and job completion~\cite{Peng2018:Optimus, Amico2019:Slowdown, Subramanya2023:Sia, Xiao2020:AntMan, Qiao2021:Pollux}, energy consumption~\cite{Amico2019:Slowdown, You2023:Zeus, Xu2025:Green} of compute clusters. However, \systemName prioritizes carbon-aware scheduling, which often conflicts with the traditional makespan, as highlighted in earlier work~\cite{Hanafy2024:GoingGreen, Hanafy2023:War}. Moreover, \systemName focuses on cloud clusters, where both the workload and cluster can be scaled dynamically. Furthermore, in contrast to conventional clusters that are often heterogeneous, cloud users typically opt for homogeneous clusters by deliberately selecting the most efficient and cost-effective resources. Lastly, we note that, although our continuous learning-based approach can work for heterogeneous clusters, by expanding the decision criteria to include the number of resources per type, evaluating this approach is left for future work.

\section{Conclusion} \label{sec:conclusion}
This paper presented \systemName, a carbon-aware resource manager for cloud clusters. \systemName employs a continuous learning approach to guide near-optimal scheduling and provisioning decisions while supporting elastic CPU and GPU workloads. Our evaluation showed that \systemName reduces carbon emissions by 57\% and performs within 2.1\% of an oracle scheduler. In the future, we plan to extend our carbon-aware provisioning and scheduling approaches with batch and interactive workloads and distributed cluster settings.

\bibliographystyle{ACM-Reference-Format}
\bibliography{main}

\end{document}